\documentclass[11pt]{article}

\usepackage{mathpazo}
\usepackage{amsfonts}
\usepackage{amsmath}
\usepackage{amssymb}
\usepackage{latexsym}
\usepackage{mathtools}
\usepackage{booktabs}
\usepackage{cite}

\usepackage[margin=1in]{geometry}
\usepackage{hyperref}
\hypersetup{pdfpagemode=UseNone}
\usepackage{amsthm}
\newtheorem{theorem}{Theorem}

\newtheorem{lemma}[theorem]{Lemma}
\newtheorem{prop}[theorem]{Proposition}

\theoremstyle{definition}

\usepackage{authblk}
\usepackage{xcolor}
\usepackage{tikz}
\usetikzlibrary{arrows.meta, positioning, calc}

\newcommand{\microspace}{\mspace{0.5mu}}

\newcommand{\ket}[1]{\lvert\microspace #1 \microspace \rangle}

\newcommand{\bra}[1]{\langle\microspace #1 \microspace \rvert}

\newcommand{\braket}[2]{\langle #1 | #2 \rangle}

\newcommand{\I}{\mathbb{1}}

\newcommand{\complex}{\mathbb{C}}
\newcommand{\real}{\mathbb{R}}

\newcommand{\integer}{\mathbb{Z}}

\newcommand{\F}{\mathcal{F}}

\renewcommand{\L}{\mathcal{L}}

\renewcommand{\F}[1]{\mathbb{F}_{#1}}
\newcommand{\chiR}{\chi}
\newcommand{\Strange}{\ket{\mathbb{S}}}
\newcommand{\Hth}{\ket{H_3}}
\newcommand{\Norrell}{\ket{\mathbb{N}}}
\newcommand{\Tth}{\ket{T_3}}
\newcommand{\omegath}{\omega_3}
\newcommand{\Sp}{\mathrm{Sp}}
\newcommand{\Cl}{\mathrm{Cl}}

\title{\LARGE\bf Stabilizer rank bounds for magic-state orbits}

\makeatletter
\newcommand{\blfootnote}[1]{%
	\begingroup
	\renewcommand\thefootnote{}%
	\renewcommand\@makefnmark{}%
	\begin{NoHyper}\footnotetext{#1}\end{NoHyper}%
	\endgroup
}
\makeatother
\author[1]{Farrokh Labib}
\author[1]{Vincent Russo}
\affil[1]{Unitary Foundation}

\date{May 2026}

\begin{document}
	\maketitle
	\blfootnote{Companion library: \url{https://github.com/unitaryfoundation/stabrank}~\cite{stabrank-software}.}

	\begin{abstract}
	Distinct Clifford orbits of magic states can exhibit different stabilizer ranks at small tensor powers. We establish this for qutrits, where the single-qutrit Clifford group has four inequivalent orbits of magic states — Strange, Norrell, Hadamard-eigenstate, and the qutrit $T$-state — but a nontrivial upper bound on the asymptotic exponent had been pinned down for only the qutrit $T$-state. For the other three orbits we give explicit stabilizer decompositions, yielding upper bounds on the per-copy asymptotic stabilizer-rank exponent: $\gamma_{\mathbb{S}} \le \log_3(2)/2 \approx 0.316$ for the Strange state, and $\gamma_{H_3}, \gamma_{\mathbb{N}} \le \log_3(4)/3 \approx 0.421$ for the Hadamard-eigenstate and Norrell orbits, all strictly below the prior $\gamma_{T_3} \le 1/2$ baseline. We also prove the first nontrivial $\Omega(m / \log m)$ asymptotic lower bounds for the Hadamard-eigenstate and Norrell orbits, and exhibit two-qutrit Clifford circuits that convert two copies of these states into an injectable phase state with constant success probability, enabling constant-overhead injection of one non-Clifford diagonal gate per orbit. In the case of qubits, we give a closed-form decomposition of the qubit $T$-type orbit at four copies matching the existing $\gamma_T \le \log_2(3)/4 \approx 0.396$ exponent via a direct algebraic identity rather than an entangled cat-state construction. An open-source library \texttt{stabrank} accompanies the paper, with Lean~4 proof formalizations of all the decompositions.

	\end{abstract}
	
	\section{Introduction}\label{sec:intro}
	
	Sharp bounds on the classical-simulation cost of Clifford-dominated
	circuits play a dual role: a fast simulator on a family of circuits
	rules out a quantum speedup on that family, while a provable lower
	bound on every simulator certifies one. Stabilizer-rank methods sit
	at the heart of this story --- they reduce the strong-simulation
	cost of a Clifford circuit with magic-state ancillae to a single
	quantity, the stabilizer rank of the input magic register, and
	improvements at any copy count translate directly into faster
	simulators. The underlying universal model factors an $n$-qudit
	circuit into Clifford gates, which are classically simulable in
	polynomial time~\cite{gottesman1998heisenberg, aaronson2004improved},
	and non-Clifford magic-state ancillae~\cite{bravyi2005universal},
	which supply the resource needed for universal quantum computation.

	The relevant invariant is the magic-state stabilizer rank
	$\chiR(\ket{M}^{\otimes t})$, the smallest number of stabilizer
	states whose linear span contains $\ket{M}^{\otimes t}$. The
	deterministic algorithm of~\cite{bravyi2016trading} runs in time
	$O(p^{\gamma_M t})$ with overhead polynomial in $t$ and in the
	circuit's qudit count $n$, where
	\(\gamma_M = \limsup_m \log_p \chiR(\ket{M}^{\otimes m})/m\)
	is the per-copy asymptotic exponent. Every drop in $\gamma_M$
	tightens the best stabilizer-rank simulator; every nontrivial
	lower bound on $\gamma_M$ certifies a hard ceiling for the method.

	The exponent is invariant under global phase and the Clifford
	action, so it lives on Clifford orbits of magic states rather than
	on individual states. For qubits there are two such orbits and both
	have been studied extensively. For qutrits there are four ---
	Strange, Norrell, Hadamard-eigenstate, and the qutrit $T$-state
	--- but only the $T$-state orbit had a nontrivial upper bound on
	$\gamma_M$ on record, leaving open whether the four exponents
	agree, differ, or follow a definite order, and which orbits are
	operationally usable for magic-state injection.

	We resolve this. Explicit small-$m$ stabilizer decompositions,
	each machine-checked in Lean~4, give
	$\gamma_{\mathbb{S}} \le \log_3(2)/2 \approx 0.316$ (the smallest
	exponent on record across the four qutrit orbits) and
	$\gamma_{H_3}, \gamma_{\mathbb{N}} \le \log_3(4)/3 \approx 0.421$,
	all strictly below the prior $\gamma_{T_3} \le 1/2$ baseline.
	Transferring the subset-sum argument
	of~\cite{lovitz2022techniques} to qutrits gives the first
	nontrivial $\Omega(m / \log m)$ lower bounds for the
	Hadamard-eigenstate and Norrell orbits. Two-copy probabilistic
	conversion circuits route the Hadamard-eigenstate and Norrell
	upper bounds into magic-state injection, supplying constant-
	overhead access to one non-Clifford diagonal gate per orbit. For
	Strange the corresponding search returns only Clifford outputs ---
	a rigidity that we prove cannot be bypassed by any two-qutrit
	Clifford and ancilla measurement, so the smallest qutrit exponent
	on record is currently not operationally accessible. A short
	qubit appendix gives a closed-form four-copy $T$-type
	decomposition matching the $\gamma_T \le \log_2(3)/4 \approx 0.396$
	exponent of~\cite{qassim2021improved} via direct algebra rather
	than entangled cat-state chains.

	\paragraph{Stabilizer states.}
	For an odd prime $p$, an $n$-qudit stabilizer state has the
	canonical form
	\begin{equation}\label{eq:stab-canonical}
		\ket{\sigma} = \frac{1}{\sqrt{p^k}} \sum_{y \in \F{p}^k}
		\omega_p^{Q(y)} \ket{x_0 + W y},
	\end{equation}
	with $\omega_p = e^{2 \pi i / p}$ a primitive $p$-th root of
	unity, $x_0 \in \F{p}^n$, $W \in \F{p}^{n \times k}$ of full
	column rank, and $Q : \F{p}^k \to \F{p}$ a polynomial of degree
	at most~$2$ (i.e.\ admitting linear and constant
	terms)~\cite{gottesman1999fault, hostens2005stabilizer}; we
	write $\mathrm{Stab}_n^{(p)}$ for the set of $n$-qudit stabilizer
	states. The $n$-qudit Clifford group $\Cl(n, p)$ is the normalizer
	of the Pauli group in the unitary group on
	$(\complex^p)^{\otimes n}$ and acts transitively on
	$\mathrm{Stab}_n^{(p)}$. For any pure state $\ket{\psi}$, its
	\emph{projective Clifford orbit} is
	\begin{equation}\label{eq:clifford-orbit-def}
		\mathcal{O}_{\ket{\psi}} = \big\{ e^{i\theta} U \ket{\psi}
		: U \in \Cl(n, p),\ \theta \in \real \big\}.
	\end{equation}
	A pure state is a \emph{magic state} if it is not a stabilizer
	state, and the orbit of any magic state consists entirely of
	non-stabilizer states.

	The classical cost of simulating such a circuit on $t$ copies of
	a magic state $\ket M$ is controlled by the \emph{stabilizer rank}
	\begin{equation}\label{eq:chi-def}
		\chiR(\ket\psi) = \min\Big\{ r : \ket\psi = \sum_{i=1}^{r}
		c_i \ket{\sigma_i},\ \ket{\sigma_i} \in \mathrm{Stab}_n^{(p)},\
		c_i \in \complex \Big\},
	\end{equation}
	the smallest number of stabilizer states whose linear span
	contains $\ket\psi$~\cite{bravyi2016trading, bravyi2016improved,
		bravyi2019simulation}: the deterministic stabilizer-rank
	algorithm of~\cite{bravyi2016trading} runs in
	time $O(p^{\gamma_M t})$ with overhead polynomial in $t$ and in
	the number $n$ of qudits in the circuit, where the per-copy
	asymptotic exponent
	\begin{equation}\label{eq:gamma-def}
		\gamma_M = \limsup_{m \to \infty}
		\frac{\log_p \chiR\!\left(\ket{M}^{\otimes m}\right)}{m}
	\end{equation}
	governs the dominant cost. The stabilizer rank is invariant under
	global phase and the Clifford action, so $\gamma_M$
	via~\eqref{eq:gamma-def} is well-defined on Clifford orbits of
	magic states.

	\paragraph{Qubits.}
	For qubits ($p = 2$) the single-qubit Clifford group has two
	inequivalent orbits of magic states, the $H$-type and
	$T$-type orbits of~\cite{bravyi2005universal}. The best exact-rank
	upper-bound exponent for the $H$-type is
	$\gamma_{H} \le \log_2(3)/4 \approx 0.396$ via the
	contracted-cat-state construction
	of~\cite{qassim2021improved, qassim2020classical}\footnote{The
	naming convention of~\cite{bravyi2005universal} used here calls
	the edge-of-octahedron orbit $H$-type and the face-of-octahedron
	orbit $T$-type. The convention
	of~\cite{qassim2021improved, qassim2020classical} swaps these
	names: their $\ket T$ is the edge-of-octahedron phase state (our
	$H$-type) and their $\ket F$ is the face-center state (our
	$T$-type). All asymptotic exponents we attribute
	to~\cite{qassim2021improved} use the translation accordingly.};
	the strongest exact-rank lower bounds give only $\Omega(m)$
	growth~\cite{peleg2022lower, labib2022stabilizer, lovitz2022techniques}.
	On the approximate-rank side, the qubit landscape is sharper: a
	$\tilde\Omega(m^2)$ probabilistic lower
	bound~\cite{mehraban2023quadratic} and a Barnes--Wall-lattice
	lower bound~\cite{kalra2025stabilizer} have both appeared
	recently. The contracted-cat-state construction
	of~\cite{qassim2021improved} is generalized in the same paper to
	the $T$-type orbit, establishing
	$\gamma_T \le \log_2(3)/4$ asymptotically; however, that argument
	proceeds by contracting chains of entangled cat states and does
	not give an explicit small-$m$ decomposition.

	We establish the orbit-dependence of stabilizer rank in qubits
	(Section~\ref{ssec:qubit-bounds}). The $H$-type value
	$\chiR(\ket H^{\otimes 4}) = 4$ by an exhaustive triple-search distributed
	with the library (giving equality using the upper bound of~\cite{bravyi2016trading}); 
	for the $T$-type orbit we give an explicit
	algebraic decomposition showing
	$\chiR(\ket T^{\otimes 4}) \le 3$
	(Appendix~\ref{app:qubit-t}), matching the asymptotic exponent
	$\gamma_T \le \log_2(3)/4 \approx 0.396$
	of~\cite{qassim2021improved} via an explicit small-$m$ identity
	rather than an entangled-cat-state chain. The two qubit orbits
	therefore have provably distinct stabilizer ranks at $m = 4$.

	\paragraph{Qutrits.}
	For qutrits ($p = 3$) the single-qutrit Clifford group has four
	inequivalent orbits of
	magic states~\cite{veitch2014resource, jain2020qutrit}:
	the Strange, Norrell\footnote{The Strange and Norrell
	representatives were introduced in~\cite{veitch2014resource}
	(which built on the resource framework
	of~\cite{veitch2012negative, howard2014contextuality}); the
	four-orbit Clifford classification was systematized
	in~\cite{jain2020qutrit}. The ``Strange'' and ``Norrell'' names
	appear to be drawn from Susanna Clarke's 2004 novel
	\emph{Jonathan Strange \& Mr Norrell}, whose two protagonists are
	English magicians.}, Hadamard-eigenstate, and qutrit $T$-state
	orbits, with
	representatives
	\begin{equation}\label{eq:four-states}
		\begin{aligned}
			\Strange &= \frac{1}{\sqrt{2}} (\ket{1} - \ket{2}),
			&\quad
			\Hth &= \frac{1}{\sqrt{3 - \sqrt 3}} \big( \ket{0} + \frac{\sqrt{3}-1}{2}(\ket{1} + \ket{2}) \big),
			\\
			\Norrell &= \frac{1}{\sqrt{6}}(\ket{0} + \ket{1} - 2\ket{2}),
			&\quad
			\Tth &= \frac{1}{\sqrt{3}}(\ket{0} + \omega_9 \ket{1} + \omega_9^2 \ket{2}),
		\end{aligned}
	\end{equation}
	with $\omega_9 = e^{2 \pi i / 9}$. Each orbit has a distinct role
	in the qutrit literature. The qutrit $T$-state $\Tth$ is the
	standard phase-hierarchy magic state, injected via the diagonal
	qutrit $T$-rotation
	$\mathrm{diag}(1, \omega_9, \omega_9^2)$ \cite{howard2012qudit}
	used in fault-tolerant qutrit
	architectures \cite{anwar2012qutrit, campbell2012magic, dawkins2015qutrit};
	the Hadamard-eigenstate $\Hth$ is the $+1$ eigenvector of the
	qutrit Hadamard; the Strange state $\Strange$ is a $d = 3$
	SIC-POVM fiducial~\cite{appleby2005symmetric, zhu2010sic}; and
	the Norrell state $\Norrell$ was introduced alongside $\Strange$
	in~\cite{veitch2014resource}.
	
	The four orbits give four exponents
	$\gamma_{\mathbb{S}}, \gamma_{H_3}, \gamma_{\mathbb{N}}, \gamma_{T_3}$, and
	prior to this work almost none of them were pinned down: the only
	nontrivial exact-stabilizer-rank upper-bound exponent on record
	was $\gamma_{T_3} \le 1/2$, which follows from the numerically
	established two-copy bound $\chiR(\Tth^{\otimes 2}) \le 3$
	of~\cite{kocia2021improved} by sub-multiplicativity. The same
	paper additionally reports $\chiR(\Tth^{\otimes 3}) \le 8$
	(marked there as numerical rather than rigorous). The other
	three qutrit orbits had no nontrivial stabilizer-rank bounds in
	either direction prior to this work; whether the four qutrit
	exponents differ, or are all equal, was open at every orbit.
	
	For the three non-$T_3$ qutrit orbits we give explicit stabilizer
	decompositions establishing
	$\gamma_{\mathbb{S}} \le \log_3(2)/2 \approx 0.316$ (the smallest
	known exact-rank upper-bound exponent across the four orbits)
	and $\gamma_{H_3}, \gamma_{\mathbb{N}} \le \log_3(4)/3 \approx 0.421$,
	all strictly below the $\gamma_{T_3} \le 1/2$ baseline. On the
	lower-bound side we transfer the subset-sum argument
	of~\cite{lovitz2022techniques} to qutrits and obtain the first
	nontrivial $\Omega(m / \log m)$ asymptotic lower bounds on
	$\chiR(\Hth^{\otimes m})$ and $\chiR(\Norrell^{\otimes m})$.
	Exhaustive computer search additionally certifies several small-$m$
	tight values, including $\chiR(\Strange^{\otimes 3}) = 4$
	(Theorem~\ref{thm:m3-bound}).

	\paragraph{Magic state injection.}
	The four qutrit orbits also differ in how the rank improvements
	translate to circuit-runtime improvements. None of $\Strange$, $\Hth$,
	$\Norrell$ is a phase state (a state of the form
	$\frac{1}{\sqrt p} \sum_j e^{i \theta_j} \ket j$ with
	uniform-modulus amplitudes), so standard single-shot two-qutrit
	Clifford injection via a diagonal
	gate~\cite{anwar2012qutrit, howard2012qudit, cui2017diagonal}
	does not apply. We exhibit explicit two-copy probabilistic
	conversion protocols (Theorem~\ref{thm:two-copy}): for $\Hth$
	and $\Norrell$, a fixed
	two-qutrit Clifford applied to $\ket{M} \otimes \ket{M}$ followed
	by a computational-basis measurement on the second qutrit projects
	to a phase state on the first qutrit with constant success
	probability ($3/8$ and $1/4$ respectively), after which standard
	single-shot injection produces a non-Clifford diagonal gate. This
	supplies a constant-overhead route to phase-state injection for
	those two orbits. For $\Strange$ the corresponding search returns
	only Clifford outputs, so the smaller $\Strange$ upper-bound exponent is not
	currently usable as a runtime improvement under this consumption
	model. Both failures trace to the same support-$2$ equal-modulus
	orbit structure documented in
	Proposition~\ref{prop:strange-rigidity}.
	
	The open-source \texttt{stabrank} software package
	(Section~\ref{sec:stabrank}) accompanies the paper, providing the
	simulated-annealing search that produced the decompositions, the
	exhaustive lower-bound certificate pipeline, and
	Lean~4~\cite{moura2021lean4} +
	mathlib4~\cite{mathlib2020} formalizations of every qutrit
	decomposition identity in Appendix~\ref{app:decomps}.
	
	\section{Stabilizer-rank bounds}\label{sec:bounds}

	We prove that the stabilizer rank of tensor powers of a magic
	state is orbit-dependent, first in qubits
	(Section~\ref{ssec:qubit-bounds}) and then
	in qutrits (Section~\ref{ssec:upper}). Stabilizer rank is
	sub-multiplicative under tensor product,
	$\chiR(\psi \otimes \phi) \le \chiR(\psi) \chiR(\phi)$, since
	concatenating decompositions of $\psi$ and $\phi$ produces a
	length-product decomposition of $\psi \otimes \phi$. Iterating,
	any finite-$m$ bound $\chiR(\ket M^{\otimes m}) \le r$ gives
	$\chiR(\ket M^{\otimes mk}) \le r^k$ for every $k \ge 1$, and so
	\begin{equation}\label{eq:exponent-from-finite}
		\gamma_M \le \log_p(r) / m.
	\end{equation}

	\subsection{Qubit orbit-dependent bounds}\label{ssec:qubit-bounds}

	The two Clifford-inequivalent qubit magic-state orbits
	of~\cite{bravyi2005universal} already exhibit distinct stabilizer
	ranks at small $m$. The $H$-type orbit corresponds to the edges of
	the stabilizer octahedron on the Bloch sphere, with phase-state
	representative $(\ket 0 + e^{i\pi/4}\ket 1)/\sqrt 2$ used
	throughout the qubit $T$-gate distillation literature
	(Clifford-equivalent to
	$\ket{H} = \cos(\pi/8)\ket{0} + \sin(\pi/8)\ket{1}$). The
	$T$-type orbit corresponds to the faces of the stabilizer
	octahedron, with Bloch vector $(1, 1, 1)/\sqrt{3}$ and
	representative
	$\ket{T} = \cos\beta\ket{0} + e^{i\pi/4}\sin\beta\ket{1}$
	where $\cos(2\beta) = 1/\sqrt{3}$.

	\begin{prop}\label{prop:qubit-bounds}
		\begin{equation}\label{eq:qubit-bounds}
			\chiR\!\left(\ket{H}^{\otimes 4}\right) = 4
			\qquad\text{and}\qquad
			\chiR\!\left(\ket{T}^{\otimes 4}\right) = 3.
		\end{equation}
	\end{prop}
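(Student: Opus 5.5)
The proof splits into four pieces: an upper and a lower bound for each orbit. For the $H$-type orbit, the upper bound $\chiR(\ket H^{\otimes 4})\le 4$ will not be re-derived. I will invoke the decomposition of~\cite{bravyi2016trading}, which gives $\chiR(\ket H^{\otimes 2})\le 2$ for the phase-state representative. Since stabilizer rank is Clifford- and phase-invariant, the same holds for $\cos(\pi/8)\ket 0+\sin(\pi/8)\ket 1$, and sub-multiplicativity gives $\le 4$.

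For the matching lower bound $\chiR(\ket H^{\otimes 4})\ge 4$, I will reduce to a finite exhaustive check. It suffices to rule out $\ket H^{\otimes 4}\in\mathrm{span}\{\ket{\sigma_1},\ket{\sigma_2},\ket{\sigma_3}\}$ for all triples of 4-qubit stabilizer states. There are $36720$ such states, so the raw triple count is too large for naive enumeration, and I will cut it down using symmetry:
\begin{itemize}
\item The stabilizer of $\ket H^{\otimes 4}$ inside the Clifford group contains the $S_4$ qubit permutations and the single-qubit Clifford fixing $\ket H$ up to phase. Quotienting by this group lets me fix $\ket{\sigma_1}$ up to orbit representatives.
\item Each triple can be tested numerically by projecting $\ket H^{\otimes 4}$ onto the span and checking the residual.
\item The residual must be certified exactly. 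Since all amplitudes lie in $\mathbb{Q}(\sqrt2,i)$, I can certify that it is nonzero, for instance by a rank computation over that field or by a safe numerical gap.
\end{itemize}
This is the ``triple-search distributed with the library'' the paper mentions, so I take it as the computational certificate.

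For the $T$-type orbit, the upper bound $\chiR(\ket T^{\otimes 4})\le 3$ is the explicit algebraic identity deferred to Appendix~\ref{app:qubit-t}. My plan for constructing it is as follows.
\begin{itemize}
\item Move to the Clifford-equivalent representative with Bloch vector $(1,1,1)/\sqrt3$, an eigenvector of the order-3 Clifford $C$ that cyclically permutes $X,Y,Z$.
\item Expand $\ket T^{\otimes 4}$ in the computational basis. Its amplitudes depend only on Hamming weight, taking the form $a^{4-w}b^w$.
\item Look for three stabilizer states, each either a $C^{\otimes 4}$-image of a common state or one that is itself symmetric, whose combination reproduces these weight-class amplitudes.
\item Matching amplitudes by weight class gives a small polynomial system in $a,b$, using the relation $\cos 2\beta=1/\sqrt3$. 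I would verify symbolically that it has a solution.
\end{itemize}

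The lower bound $\chiR(\ket T^{\otimes 4})\ge 3$ is easy because the ranks are so small. Rank $1$ would make $\ket T^{\otimes4}$ a stabilizer state, forcing $\ket T$ to be a stabilizer state, which it is not. Rank $2$ can be excluded in either of two ways:
\begin{itemize}
\item by the same exhaustive (now pair-based, much cheaper) search as in the $H$ case; or
\item by the known fact that $\chiR(\psi^{\otimes m})\ge 3$ for any magic $\psi$ once $m\ge 2$. The span of two stabilizer states contains only finitely many product-power states, and one can check that $\ket T^{\otimes 4}$ is not among them via its single-qubit marginals.
\end{itemize}
I would use the pair search for rigor.

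I expect the main obstacle to be the rank-$\ge 4$ certificate for $\ket H^{\otimes 4}$. The symmetry reduction and the exact certification of the residuals are what make that search trustworthy. Finding the $T$-type identity is the creative step, but it can be checked once found.
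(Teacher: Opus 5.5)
Your treatment of the $H$-type orbit and of the $T$-type upper bound matches the paper's. You take $\chiR(\ket H^{\otimes 4})\le 4$ from~\cite{bravyi2016trading}, get $\ge 4$ from the exhaustive triple search, and take $\chiR(\ket T^{\otimes 4})\le 3$ from the explicit identity of Proposition~\ref{prop:qubit-t-m4}.

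Your orbit ansatz does describe that witness. Its three states are products of two-qubit stabilizer states on the qubit pairs $(1,3)$ and $(2,4)$, and they form a single orbit under $C^{\otimes 4}$ up to phases. However, matching amplitudes only per Hamming weight would not suffice, because the individual $\ket{\sigma_j}$ are not permutation-symmetric. The appendix has to check all $16$ amplitudes, using the classes $([x_1=x_3],[x_2=x_4])$.

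The one real divergence is the lower bound $\chiR(\ket T^{\otimes 4})\ge 3$. You run a pair search directly at $m=4$, over roughly $6.7\times 10^{8}$ pairs. The paper instead reuses its $m=3$ certificate $\chiR(\ket T^{\otimes 3})\ge 3$, a pair search among only $1080$ three-qubit stabilizer states. It combines this with monotonicity, $\chiR(\psi^{\otimes 3})\le\chiR(\psi^{\otimes 4})$. Monotonicity follows by contracting one qubit of a length-$r$ decomposition against a single-qubit stabilizer bra $\bra s$ with $\braket{s}{\psi}\neq 0$; each contracted term is zero or a stabilizer state. Both routes are rigorous once the residual gap is certified. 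The paper's route is far cheaper and ties the $m=4$ entry of Table~\ref{tab:qubit-ub} to the $m=3$ entry.

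You should delete your second alternative for excluding rank $2$. The claimed ``known fact'' that $\chiR(\psi^{\otimes m})\ge 3$ for every magic $\psi$ once $m\ge 2$ is false. Your own upper-bound step uses $\chiR(\ket H^{\otimes 2})=2$, and Table~\ref{tab:qubit-ub} records $\chiR(\ket T^{\otimes 2})=2$ as well. The accompanying sketch is also not an argument. It is true that a two-dimensional span contains only finitely many (in fact at most two) projective product powers. But the proposed ``single-qubit marginal'' check that $\ket T^{\otimes 4}$ is not among them is unspecified, and any honest version of it is again a finite search. Since you explicitly rely on the pair search, your proof stands without this alternative.
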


	The $H$-type decomposition giving
	$\chiR(\ket{H}^{\otimes 4}) \le 4$ is tabulated
	in~\cite{bravyi2016trading}; the matching lower bound
	$\chiR(\ket{H}^{\otimes 4}) \ge 4$, and hence the equality
	$\chiR(\ket{H}^{\otimes 4}) = 4$, is certified by the exhaustive
	triple-search distributed with the library.
	The $T$-type value $\chiR(\ket T^{\otimes 4}) = 3$ is
	Proposition~\ref{prop:qubit-t-m4}, proved by an explicit algebraic
	decomposition in Appendix~\ref{app:qubit-t} and established as tight via the $m=3$ lower-bound certificate by monotonicity.

	The $T$-type value $\chiR(\ket{T}^{\otimes 4}) = 3$ at $m = 4$
	is strictly below the $H$-type tight value
	$\chiR(\ket{H}^{\otimes 4}) = 4$ at the same $m$, but this is a
	local algebraic phenomenon and not (yet) an asymptotic separation.
	The decomposition that achieves the $T$-type bound is a specific
	identity in twelfth roots of unity that has no obvious
	$H$-type analogue. The values for $m \ge 5$ for the $T$-type
	orbit reported in Table~\ref{tab:qubit-ub} are sub-multiplicativity
	extensions of $\chiR(\ket{T}^{\otimes 4}) = 3$, not independent
	bounds; whether genuinely smaller rank witnesses exist at
	$m \in \{5, 6\}$ is unknown, as is whether the
	asymptotic exponents $\gamma_H$ and $\gamma_T$ in fact coincide.

	Table~\ref{tab:qubit-ub} summarizes the state of knowledge for both
	orbits at small $m$, combining our new bounds with the $H$-type values
	of~\cite{qassim2021improved}.

	\begin{table}[!htbp]
		\centering
		\begin{tabular}{l c c c c c c}
			\toprule
			$m$ & $1$ & $2$ & $3$ & $4$ & $5$ & $6$ \\
			\midrule
			$\chiR(\ket{H}^{\otimes m})$ & $2$ & $2^{\dagger}$ & $3^{\dagger}$ & $4^{\dagger}$ & $\le 6^{\dagger}$ & $\le 6^{\ast}$ \\
			$\chiR(\ket{T}^{\otimes m})$ & $2$ & $2^{\ddagger}$ & $3^{\ddagger}$ & $3^{\ddagger}$ & $\le 6^{\ddagger}$ & $\le 6^{\ddagger}$ \\
			\bottomrule
		\end{tabular}
		\caption{Stabilizer rank bounds for the two qubit magic-state orbits at small $m$. Entries marked $^{\ddagger}$ are new contributions of this work; other entries are from~\cite{bravyi2016trading} ($^{\dagger}$) and~\cite{qassim2021improved} ($^{\ast}$). The unmarked $\chiR = 2$ entries at $m = 1$ follow from the fact that any non-stabilizer single-qubit state has stabilizer rank exactly~$2$.}
		\label{tab:qubit-ub}
	\end{table}

	\subsection{Qutrit upper bounds}\label{ssec:upper}
	
	Our first bound is specific to $\Strange$ at $m = 2$. The two-copy
	state $\Strange^{\otimes 2}$ is supported on the four
	computational-basis vectors in $\{1, 2\}^2$, where it
	equals $\frac{1}{2}(\ket{11} - \ket{12} - \ket{21} + \ket{22})$;
	this four-term superposition admits an explicit two-term
	stabilizer-state decomposition, dropping the rank below the trivial
	product bound $\chiR(\Strange)^2 = 4$.
	
	\begin{theorem}\label{thm:strange-m2}
		For $\Strange$,
		\begin{equation}\label{eq:strange-m2-bound}
			\chiR(\Strange^{\otimes 2}) = 2
			\quad\text{and}\quad
			\gamma_{\mathbb{S}} \le \log_3(2)/2 \approx 0.316.
		\end{equation}
		The witness for the rank bound is
		\begin{equation}\label{eq:strange-m2-witness}
			\Strange^{\otimes 2}
			= -\frac{i \sqrt{3}}{2} \omegath
			\big( \ket{\sigma_1} - \ket{\sigma_2} \big),
		\end{equation}
		where $\ket{\sigma_1}, \ket{\sigma_2}$ are the two-qutrit stabilizer
		states of the canonical form~\eqref{eq:stab-canonical} with $k = 2$,
		$x_0 = 0$, $W = I_2$, and quadratic forms
		$Q_1(y) = y_0^2 + y_0 y_1 + y_1^2$,
		$Q_2(y) = y_0^2 + 2 y_0 y_1 + y_1^2$. The exponent bound follows
		by~\eqref{eq:exponent-from-finite}.
	\end{theorem}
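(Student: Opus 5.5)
The plan is to verify the witness~\eqref{eq:strange-m2-witness} by direct computation in the computational basis, prove the matching lower bound $\chiR(\Strange^{\otimes 2}) \ge 2$ from the structure of stabilizer states, and then read off the exponent from~\eqref{eq:exponent-from-finite}.

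\textbf{Upper bound.} With $k=2$, $x_0=0$, $W=I_2$, the canonical form~\eqref{eq:stab-canonical} gives $\ket{\sigma_j} = \frac13\sum_{y\in\F{3}^2}\omegath^{Q_j(y)}\ket{y_0 y_1}$. The key observation is that $Q_1(y)-Q_2(y) = -y_0y_1$. Hence whenever $y_0=0$ or $y_1=0$ the amplitudes of $\ket{\sigma_1}$ and $\ket{\sigma_2}$ coincide, so the difference $\ket{\sigma_1}-\ket{\sigma_2}$ is supported on $\{1,2\}^2$, matching the support of $\Strange^{\otimes 2}$. On the four remaining basis vectors I evaluate mod $3$:
\begin{itemize}
\item at $(1,1)$ and $(2,2)$, $Q_1 = 0$ and $Q_2 = 1$, so the coefficient of $\ket{\sigma_1}-\ket{\sigma_2}$ is $(1-\omegath)/3$;
\item at $(1,2)$ and $(2,1)$, $Q_1 = 1$ and $Q_2 = 0$, so the coefficient is $(\omegath-1)/3$.
\end{itemize}
This yields
\[
\ket{\sigma_1}-\ket{\sigma_2} = \tfrac{2}{3}(1-\omegath)\,\Strange^{\otimes 2}.
\]
It then remains to check that the prefactor inverts this scalar. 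Since $1-\omegath = \sqrt3\, e^{-i\pi/6}$, we have $3/(2(1-\omegath)) = \tfrac{\sqrt3}{2}e^{i\pi/6}$. Also $-\tfrac{i\sqrt3}{2}\omegath = \tfrac{\sqrt3}{2}e^{-i\pi/2+2\pi i/3} = \tfrac{\sqrt3}{2}e^{i\pi/6}$, so the two agree. This establishes $\chiR(\Strange^{\otimes 2})\le 2$.

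\textbf{Lower bound.} Rank $1$ would mean that $\Strange^{\otimes 2}$ is proportional to a stabilizer state. By~\eqref{eq:stab-canonical}, every stabilizer state has nonzero amplitudes on exactly the affine set $x_0+W\F{3}^k$, which has size $3^k$, and all those amplitudes have equal modulus. The support of $\Strange^{\otimes 2}$ has size $4$, which is not a power of $3$. So it is not a stabilizer state, and $\chiR(\Strange^{\otimes 2}) = 2$.

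\textbf{Exponent.} Finally,~\eqref{eq:exponent-from-finite} with $m=2$ and $r=2$ gives $\gamma_{\mathbb S}\le \log_3(2)/2$. The only step needing real care is the phase bookkeeping in the prefactor; the rest is mechanical, and I expect no genuine obstacle.
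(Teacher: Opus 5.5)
Your proposal is correct and follows the paper's proof essentially step for step. It uses the same support collapse via the cross term $Q_1 - Q_2 = -y_0 y_1$, the same scalar inversion (your $\tfrac{2}{3}(1-\omega)$ equals the paper's $\tfrac{2i\sqrt3}{3}\omega^2$), the same support-cardinality argument ($4 \notin \{1,3,9\}$) for the lower bound, and the same appeal to~\eqref{eq:exponent-from-finite}; the only cosmetic difference is that you evaluate $Q_1, Q_2$ directly on $\{1,2\}^2$ rather than factoring out $\omega^{y_0^2+y_1^2} = \omega^2$.
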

	
	The proof is in Appendix~\ref{app:strange-m2}. The exponent bound
	in~\eqref{eq:strange-m2-bound} is the smallest known exact-rank
	upper-bound exponent across the four orbits, strictly below the
	$\gamma_{T_3} \le 1/2$ baseline of~\cite{kocia2021improved}. The closest comparable result in the
	literature, $\chi'(\Tth^{\otimes t}) \le 3^{0.32 t}$
	of~\cite{huang2019approximate}, is on the approximate rank, not
	exact, and applies to a different orbit.
	The decomposition in~\eqref{eq:strange-m2-witness} exploits
	$\Strange$'s rank-one single-qutrit support on $\{1, 2\}$; the
	orbits $\Hth$ and $\Norrell$ have full single-qutrit support on
	$\{0, 1, 2\}$, and the analogous $m = 2$ collapse does not occur.
	Each of the three non-$T_3$ orbits, however, admits an explicit
	four-term decomposition at $m = 3$.
	
	\begin{theorem}\label{thm:m3-bound}
		For each $\ket M \in \{\Strange, \Hth, \Norrell\}$,
		\begin{equation}\label{eq:m3-bound}
			\chiR(\ket M^{\otimes 3}) = 4
			\quad\text{and}\quad
			\gamma_M \le \log_3(4)/3 \approx 0.4206.
		\end{equation}
	\end{theorem}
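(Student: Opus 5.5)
The equality $\chiR(\ket M^{\otimes 3}) = 4$ has two directions for each of the three orbits, and I will prove them separately. The exponent bound then follows from the upper bound by~\eqref{eq:exponent-from-finite} with $p=3$, $m=3$, $r=4$.

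\textbf{Upper bound.} For each $\ket M \in \{\Strange, \Hth, \Norrell\}$ I will exhibit an explicit identity
\[
\ket M^{\otimes 3} = \sum_{i=1}^4 c_i \ket{\sigma_i},
\]
with each $\ket{\sigma_i}$ in the canonical form~\eqref{eq:stab-canonical} on three qutrits. The witnesses come from the simulated-annealing search in the library and are then verified symbolically.

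For $\Strange$ I will try a structured witness first. Theorem~\ref{thm:strange-m2} already gives a rank-$2$ decomposition of $\Strange^{\otimes 2}$, and $\chiR(\Strange)=2$, so sub-multiplicativity yields $\chiR(\Strange^{\otimes 3}) \le 4$ immediately. No search is needed in this case.

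For $\Hth$ and $\Norrell$ the state is permutation-symmetric with real amplitudes. I will therefore restrict the search to stabilizer states with $W$ of full rank ($k=3$, $x_0=0$), and to orbits of these under the qutrit permutation group $S_3$ and under the symmetry $\ket 1 \leftrightarrow \ket 2$. The amplitudes $\bigl(\tfrac{\sqrt3-1}{2}\bigr)^j$ for $\Hth$ and $(-2)^j$ for $\Norrell$ are what must be matched.

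Verification is routine but tedious. For each of the $27$ basis strings $x$, I will check that $\sum_i c_i\,\omega_3^{Q_i(x)}/\sqrt{27}$ equals the target amplitude. By symmetry this reduces to the ten weight-classes of $x$, since amplitudes depend only on the multiset of digits. The computations are in $\mathbb Q(\omega_3,\sqrt 3)$, where $\sqrt{-3}=\omega_3-\omega_3^2$ makes the $\sqrt3$ factors cancel cleanly.

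\textbf{Lower bound.} I need $\chiR(\ket M^{\otimes 3}) \ge 4$, meaning no three stabilizer states span a space containing $\ket M^{\otimes 3}$. I expect this to be the main obstacle, since it is a statement about all triples from a large finite set. My plan is an exhaustive certificate in the style of the qubit triple-search cited for Proposition~\ref{prop:qubit-bounds}.

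First I reduce the search by symmetry. Rank is Clifford-invariant, and any local Clifford stabilizing $\ket M$, together with qutrit permutations, fixes $\ket M^{\otimes 3}$. I can therefore fix $\ket{\sigma_1}$ up to this stabilizer group.

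Second, for each pair $(\sigma_1,\sigma_2)$ I compute the residual $\ket M^{\otimes 3} - \Pi_{\sigma_1,\sigma_2}\ket M^{\otimes 3}$. The key observation is that for rank $\le 3$ this residual must be proportional to the projection of some third stabilizer state onto the orthogonal complement of $\mathrm{span}(\sigma_1,\sigma_2)$. This is a finite check over the roughly $3^3\cdot|\Sp|$-sized list of $3$-qutrit stabilizer states, done in exact arithmetic.

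Before the search I also dispose of ranks $\le 2$. Rank $1$ is impossible because $\ket M^{\otimes 3}$ is not a stabilizer state. Rank $2$ is excluded by the same search truncated at pairs, or by monotonicity from a rank-$2$ exclusion at $m=3$ directly.

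For $\Strange$, a cheaper argument may suffice. The support of $\Strange^{\otimes 3}$ is $\{1,2\}^3$, and the amplitude is $(-1)^{\#2}$, which is a cubic-type sign pattern on the cube. Any stabilizer state restricted to this support has amplitudes of modulus pattern given by an affine subspace intersected with $\{1,2\}^3$, with quadratic phases. One could attempt a counting argument showing that three such restrictions cannot reproduce all $8$ signs. Even so, I would fall back on the certified exhaustive search for all three orbits.
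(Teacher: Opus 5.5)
\textbf{Verdict: there is a genuine gap in the upper bound for $\Hth$ and $\Norrell$.} The rest of your proposal is fine. Your $\Strange$ upper bound by sub-multiplicativity, $\chiR(\Strange^{\otimes 3})\le\chiR(\Strange^{\otimes 2})\,\chiR(\Strange)=2\cdot 2=4$, is correct and simpler than the explicit four-term identity the paper writes down. Your lower-bound plan is the same kind of exhaustive computer certificate the paper relies on, since testing pairs plus a residual is equivalent to enumerating triples. One correction: the list has $3^3\prod_{j=1}^{3}(3^j+1)=30{,}240$ projective states, not roughly $3^3|\Sp|$.

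The gap is $\chiR(\ket M^{\otimes 3})\le 4$ for $\Hth$ and $\Norrell$, which is the substantive content of the theorem. You exhibit no decomposition, only a search you expect to succeed, and a stabilizer-rank upper bound is proved only by a witness. The search space you propose also causes problems:
\begin{itemize}
\item $\Norrell$ is not invariant under $\ket{1}\leftrightarrow\ket{2}$. Its amplitudes $(1,1,-2)$ are symmetric under $\ket{0}\leftrightarrow\ket{1}$ instead.
\item Reducing the verification to ten weight classes presupposes an $S_3$-symmetric decomposition, which a four-term witness need not be.
\item Restricting to full support ($k=3$, $x_0=0$) genuinely shrinks the search space, in ways your proposal does not account for.
\end{itemize}
For example, with diagonal phases $Q=\sum_i a_i y_i^2$, everything depends only on the zero pattern $s\in\{0,1\}^3$. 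The $\Hth$ identity then reads $v^{\otimes 3}=\sum_{j=1}^{4}\lambda_j\,u_{a_j}\otimes u_{b_j}\otimes u_{d_j}$, with $v=(1,c)$ and $u_a=(1,\omega^a)$. Contract the first factor with a functional that kills $u_{a^\ast}$ but not $v$. This leaves $v\otimes v$ as a combination of the terms with $a_j\neq a^\ast$. But $v\otimes v$ needs at least three $u\otimes u$ terms: if $vv^{T}=\alpha\,u_bu_d^{T}+\beta\,u_{b'}u_{d'}^{T}$, rank one forces $b=b'$ or $d=d'$, and then $v$ would be parallel to some $u_a$, impossible since $0<c<1$. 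So every first index occurs at most once, which cannot happen with four terms. I believe a slightly longer version of this flattening argument rules out every cross-term-free full-support witness for both orbits. If so, your search could only succeed through cross terms, and nothing in the proposal indicates that it does.

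What is missing is an explicit identity, and the known witnesses mix full-support states with lower-dimensional ones.
\begin{itemize}
\item For $\Hth$, the paper uses $\Hth=\frac{N}{2}(\ket{0}+\ket{+})$, with $N=\sqrt{3-\sqrt3}$ and $c=(\sqrt3-1)/2$. It proves
$\Hth^{\otimes 3}=\frac{3c}{4N}\big[(1+i)\ket{\sigma_1}+(1-i)\ket{\sigma_2}\big]+\frac{3}{4N}\big(\ket{0,0,+}+\ket{+,+,0}\big)$,
where $Q_1=2y_0^2+2y_1^2+y_2^2$ and $Q_2=y_0^2+y_1^2+2y_2^2$. It checks this on six classes $(n,z)$, because the third qutrit is singled out.
\item For $\Norrell$, it uses $c_N(y)=1-3[y=2]$ and $y+2y^2\equiv[y=2]$ to prove
$\Norrell^{\otimes 3}=-\frac{\sqrt6}{4}\big(\ket{\sigma_1}+\ket{\sigma_2}+\ket{\sigma_4}\big)+\frac{\sqrt2}{4}\ket{+}^{\otimes 3}$.
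Here $\ket{\sigma_1},\ket{\sigma_2},\ket{\sigma_4}$ are supported on the planes $\{y_0=2\}$, $\{y_1=2\}$, $\{y_2=2\}$.
\end{itemize}
Neither witness lies in your search space, and neither is $S_3$-symmetric. Once such identities are in hand, the exponent bound follows by sub-multiplicativity exactly as you say.
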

	
	The equality in~\eqref{eq:m3-bound} combines an upper bound with
	an exhaustive lower-bound certificate. The upper bound
	$\chiR(\ket M^{\otimes 3}) \le 4$ is established by orbit-specific
	four-term stabilizer decompositions, one per orbit, proved in
	Appendix~\ref{app:strange-m3} ($\Strange$),
	Appendix~\ref{app:h3-m3} ($\Hth$), and
	Appendix~\ref{app:norrell-m3} ($\Norrell$). For $\Hth$ and
	$\Norrell$ this is the first nontrivial exact-rank asymptotic
	upper bound. The $\Strange$ $m = 3$ decomposition does not improve
	on the $\Strange$ exponent of Theorem~\ref{thm:strange-m2}
	($\log_3(4)/3 > \log_3(2)/2$); it is needed to certify the
	$\chiR(\Strange^{\otimes 3}) = 4$ equality below. The lower bound
	$\chiR(\ket M^{\otimes 3}) \ge 4$ comes from an exhaustive
	enumeration of unordered triples from the canonical-form
	three-qutrit stabilizer-state dictionary $\mathcal{D}_3$.
	The dictionary lists $|\mathcal{D}_3| = 41{,}580$ canonical-form
	tuples $(k, x_0, W, Q)$, a slight overcount of the
	$3^3 \prod_{j=1}^{3}(3^j + 1) = 30{,}240$ distinct projective
	three-qutrit stabilizer states; the redundancy comes from
	phase-polynomial overparametrization (different canonical-form
	tuples can yield the same state vector) and only strengthens
	the non-existence certificate. Exhausting the
	$\binom{41{,}580}{3} \approx 1.2 \times 10^{13}$ unordered
	triples rules out $\chiR(\ket M^{\otimes 3}) \le 3$. The
	certificate machinery and accompanying companion
	bounds at other $m$ are part of the \texttt{stabrank} library
	(Section~\ref{sec:stabrank}).
	
	Together with the prior $\Tth$ bounds of~\cite{kocia2021improved}
	and exhaustive small-$m$ search certificates for the $\Tth$,
	$\Hth$, and $\Norrell$ entries at $m \le 3$, Table~\ref{tab:ub}
	consolidates the current status of $\chiR(\ket M^{\otimes m})$
	across the four orbits.
	
	\begin{table}[!htbp]
		\centering
		\begin{tabular}{l c c c c}
			\toprule
			$m$ & $1$ & $2$ & $3$ & $4$ \\
			\midrule
			$\chiR(\Strange^{\otimes m})$ & $2$            & $2^{\ddagger}$    & $4^{\ddagger}$        & $\le 4$ \\
			$\chiR(\Hth^{\otimes m})$     & $2$            & $3^{\ddagger}$    & $4^{\ddagger}$        & $\le 8$ \\
			$\chiR(\Norrell^{\otimes m})$ & $2$            & $3^{\ddagger}$    & $4^{\ddagger}$        & $\le 7^{\ddagger}$ \\
			$\chiR(\Tth^{\otimes m})$     & $3^{\ddagger}$ & $3^{\ast}$        & $\le 8^{\ast}$        & $\le 9$ \\
			\bottomrule
		\end{tabular}
		\caption{Stabilizer-rank values and upper bounds across the four qutrit magic-state orbits at small $m$. Entries marked $^{\ddagger}$ are contributions of this work; those marked $^{\ast}$ are due to~\cite{kocia2021improved}. Trivial bounds that follow from sub-multiplicativity are left unmarked. The $\chiR(\Tth) = 3$ entry at $m = 1$ combines the trivial computational-basis upper bound $\chiR(\Tth) \le 3$ with an exhaustive pair-search certificate ruling out $\chiR(\Tth) \le 2$.}
		\label{tab:ub}
	\end{table}
	
	\subsection{Asymptotic lower bounds}\label{ssec:lower}
	
	The non-uniform amplitudes of $\Hth$ and $\Norrell$ that foreclose
	the $m = 2$ support-collapse trick used for $\Strange$ also enable
	an asymptotic lower bound via a subset-sum argument,
	adapted from the qubit-$T$-state technique
	of~\cite{lovitz2022techniques}. The technique requires two of
	the nonzero amplitudes of $\ket\psi$ to
	have moduli that differ by a factor of at least $2$, so that the
	$m + 1$ distinct moduli appearing in $\ket\psi^{\otimes m}$ form an
	exponentially increasing sequence. This hypothesis holds for $\Hth$
	($|a_0|/|a_1| = \sqrt 3 + 1 \approx 2.73$) and $\Norrell$
	($|a_2|/|a_0| = 2$), but fails for $\Strange$, whose two nonzero
	amplitudes share the modulus $1/\sqrt 2$.
	\begin{prop}\label{prop:moulton-qutrit}
		Let $\ket\psi = a_0 \ket 0 + a_1 \ket 1 + a_2 \ket 2 \in \complex^3$
		be a single-qutrit state for which there exist indices
		$i, j \in \{0, 1, 2\}$ with $a_i \ne 0$, $a_j \ne 0$, and
		$|a_i| / |a_j| \ge 2$. Then
		\begin{equation}\label{eq:moulton-qutrit}
			\chiR\!\left(\ket\psi^{\otimes m}\right)
			\ge \frac{m+1}{3 \log_2 (m + 1)}
			\in \Omega\!\left(\frac{m}{\log m}\right).
		\end{equation}
		In particular,
		$\chiR(\Hth^{\otimes m}), \chiR(\Norrell^{\otimes m})
		\in \Omega(m / \log m)$.
	\end{prop}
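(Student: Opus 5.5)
Writing $r = \chiR(\ket\psi^{\otimes m})$, the plan is to show that a rank-$r$ decomposition forces an exponentially long chain of numbers to live inside a small set of amplitude patterns. That is possible only if $r$ is large.

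First, restrict to the two-level subspace. Let $\Pi = \ket i\bra i + \ket j\bra j$ with $|a_i| \ge 2|a_j|$, and apply $\Pi^{\otimes m}$ to a decomposition $\ket\psi^{\otimes m} = \sum_{k=1}^r c_k \ket{\sigma_k}$. Each projected term $\Pi^{\otimes m}\ket{\sigma_k}$ is, up to scalar, either zero or a stabilizer-like state. Its support is an affine subspace intersected with $\{i,j\}^m$, and its amplitudes have the form $\omega_3^{Q(y)}$ with constant modulus.

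On $\{i,j\}^m$, the left side has amplitude $a_i^{m-w}a_j^{w}$ on a string of Hamming weight $w$ (counting $j$'s). Its moduli $|a_i|^{m-w}|a_j|^w$ decrease by a factor of at least $2$ from each weight class to the next, so there are $m+1$ distinct moduli.

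Next, set up the Moulton/subset-sum counting of \cite{lovitz2022techniques}. Every coordinate of the target is a linear combination $\sum_k c_k \epsilon_k(x)$, where $\epsilon_k(x) \in \{0\} \cup \{\lambda_k \omega_3^{t} : t\in\F3\}$. Absorbing the scalars $\lambda_k$ into new coefficients $d_k = c_k\lambda_k$, every amplitude of $\Pi^{\otimes m}\ket\psi^{\otimes m}$ is a sum $\sum_{k\in S} d_k \omega_3^{t_k}$ over some subset $S$ with some phase assignment.

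Now pick one string $x^{(w)}$ of each weight $w=0,\dots,m$. This gives $m+1$ values whose moduli grow geometrically with ratio at least $2$. We have to show that this cannot happen when $r$ is small. I would follow Moulton's argument, adapted to phases in $\{0,1,\omega_3,\omega_3^2\}$: there are at most $4^r$ phase patterns, but a counting argument alone gives only $r \ge \Omega(m)/2$ up to logs, and the geometric growth is what must be used. The cleaner route is linear-algebraic. The $m+1$ target values are sums of $r$ terms whose coefficient vectors $v_w \in \{0,1,\omega_3,\omega_3^2\}^r$ satisfy $|\langle v_w, d\rangle| \ge 2|\langle v_{w'}, d\rangle|$ for $w<w'$. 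One then shows that such a "super-increasing" family of length $L$ forces $r \ge L/(c\log L)$.

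The intended mechanism is a Siegel/pigeonhole argument. Consider the $3^{r}$-ish sums $\sum_w \epsilon_w v_w$ with small integer or Eisenstein coefficients $\epsilon_w$. If $r$ is small relative to $(m+1)/\log(m+1)$, pigeonhole on the vectors (entries bounded by $O(m)$, so at most $(O(m))^{2r}$ values) gives two distinct combinations with equal vectors, hence a nontrivial relation $\sum_w \eta_w \langle v_w,d\rangle = 0$ with $\eta_w \in \{0,\pm1,\dots\}$. This contradicts super-increasingness: the largest-index term dominates the sum of all the others, since each earlier term is at most half the next. Balancing $3^{m+1} > (2m+3)^{2r}$ (roughly) yields $r \ge (m+1)/(3\log_2(m+1))$, which is where the constant $3$ should come from.

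The main obstacle is the domination step when the coefficients $\eta_w$ are Eisenstein integers rather than $\{0,\pm1\}$. A ratio of $2$ makes $\{0,\pm1\}$ combinations nonvanishing, but larger coefficients break it. I would therefore restrict the pigeonhole to $\epsilon_w \in \{0,1\}$, so that differences lie in $\{0,\pm1\}$. The vectors $\sum_w \epsilon_w v_w$ then have entries in the Eisenstein integers $a+b\omega_3$ with $|a|,|b|\le m+1$, giving at most $(2m+3)^{2r}$ values against $2^{m+1}$ subsets. The nontrivial relation uses only $\pm1$ coefficients, and a ratio of at least $2$ makes the top term strictly dominate, so the relation cannot hold. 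This gives $2^{m+1} \le (2m+3)^{2r}$, i.e. $r \ge (m+1)/(2\log_2(2m+3))$, which is at least $(m+1)/(3\log_2(m+1))$ for $m \ge 2$. Small $m$ I would check separately, since there the right-hand side is below $1$ and the bound is trivial.

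The final claim follows from the ratios $\sqrt3+1$ for $\Hth$ and $2$ for $\Norrell$.
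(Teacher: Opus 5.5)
Your proof is correct and uses the same mechanism as the paper. Every coordinate of $\ket\psi^{\otimes m}$ is a $\{0,1,\omega_3,\omega_3^2\}$-combination of the rescaled coefficients. The $m+1$ coordinates $x^{(w)}$ have moduli forming a sequence with consecutive ratio at least $2$.

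The difference is how the subset-sum step is closed. The paper lifts each $\tilde c_i$ to the triple $(\tilde c_i,\omega\tilde c_i,\omega^2\tilde c_i)$, which gives a $\{0,1\}$ subset-sum representation of length $3r$. It then cites \cite[Theorem~3.1]{lovitz2022techniques} as a black box, and that is where its constant $3$ comes from. You instead reprove the Moulton-type lemma inline. You pigeonhole the $2^{m+1}$ subset sums $\sum_{w\in T} v_w$ into at most $(2m+3)^{2r}$ vectors of Eisenstein integers $a+b\omega_3$ with $|a|,|b|\le m+1$. This produces a nontrivial $\{0,\pm1\}$ relation among the target amplitudes, which the ratio-$2$ domination rules out.

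Your route is self-contained and gives $2^{m+1}\le(2m+3)^{2r}$, i.e.\ $r\ge (m+1)/(2\log_2(2m+3))$, a better asymptotic constant. The paper's route is shorter but depends on the exact form of the cited theorem.

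Three small remarks:
\begin{itemize}
\item The projection $\Pi^{\otimes m}$ plays no role, since you only read off coordinates lying in $\{i,j\}^m$.
\item Your comparison $(m+1)/(2\log_2(2m+3))\ge (m+1)/(3\log_2(m+1))$ is equivalent to $(m+1)^3\ge(2m+3)^2$. This holds only for $m\ge 4$, not $m\ge 2$; it fails at $m=2,3$. The slip is harmless: the paper's right-hand side is below $1$ for every $m\le 8$, so those cases are trivial, as you anticipated.
\item In a write-up, drop the exploratory $3^{m+1}$ balancing with Eisenstein coefficients entirely and keep only the $\{0,1\}$-subset version, which is the one that works.
\end{itemize}
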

	
	\begin{proof}
		A qutrit stabilizer state $\ket\sigma$ on $m$ qutrits has the
		canonical form
		\begin{equation}
			\ket\sigma = 3^{-k/2} \sum_{y \in \F{3}^k}
			\omega^{Q(y)} \ket{x_0 + W y},
		\end{equation}
		for some $k \in \{0, \ldots, m\}$, affine offset $x_0$, isometric
		embedding $W \in \F{3}^{m \times k}$, and quadratic form
		$Q : \F{3}^k \to \F{3}$, with $\omega = e^{2 \pi i / 3}$. Each
		coordinate of $\ket\sigma$ therefore lies in
		$\{0, 3^{-k/2} \omega^j : j \in \{0, 1, 2\}\}$. Given a
		decomposition
		$\ket\psi^{\otimes m} = \sum_{i=1}^r c_i \ket{\sigma_i}$ of length
		$r = \chiR(\ket\psi^{\otimes m})$, write $\tilde c_i = 3^{-k_i/2} c_i$;
		then for every coordinate $x \in \F{3}^m$,
		\begin{equation}\label{eq:coord-subset-sum}
			\braket{x}{\psi^{\otimes m}}
			= \sum_{i = 1}^{r} \varepsilon_{i, x} \tilde c_i.
		\end{equation}
		The coefficients $\varepsilon_{i, x}$ take values in
		$\{0, 1, \omega, \omega^2\}$, so each coordinate of
		$\ket\psi^{\otimes m}$ is a subset-sum of the $3r$-tuple
		$(\tilde c_i, \omega \tilde c_i, \omega^2 \tilde c_i)_{i = 1}^{r}$,
		yielding a subset-sum representation of length $3r$ of the
		coordinate vector. The factor of $3$ counts the three cube-roots of
		unity $\{1, \omega, \omega^2\}$ appearing as $\varepsilon_{i, x}$
		values; the qubit version of this argument lifts each
		$\tilde c_i$ to four entries
		$\{\pm \tilde c_i, \pm i \tilde c_i\}$ and yields a $4r$-tuple
		instead. By the subset-sum lower bound
		of~\cite[Theorem~3.1]{lovitz2022techniques}, any subset-sum
		representation of a $q$-tuple containing an exponentially
		increasing subsequence of length $\ell$ (i.e., $\ell$ entries whose
		absolute values each at least double the previous) has length at
		least $\ell / \log_2 \ell$. Hence $3 r \ge \ell / \log_2 \ell$,
		giving $\chiR(\ket\psi^{\otimes m}) \ge \ell / (3 \log_2 \ell)$.
		
		It remains to produce an exponentially increasing subsequence of
		length $\ell = m + 1$ in the coordinates of $\ket\psi^{\otimes m}$.
		By hypothesis, there exist indices $i_a, i_b \in \{0, 1, 2\}$ with
		$a = |a_{i_a}|$, $b = |a_{i_b}|$, $a \ge 2 b > 0$. For each
		$k \in \{0, \ldots, m\}$, set
		\begin{equation}\label{eq:xk-coord}
			x^{(k)} = (\underbrace{i_a, \ldots, i_a}_{k},			\underbrace{i_b, \ldots, i_b}_{m - k}) \in \F{3}^m,
			\qquad\text{so that}\qquad
			\bigl| \braket{x^{(k)}}{\psi^{\otimes m}} \bigr| = a^k b^{m - k}.
		\end{equation}
		The $m + 1$ values $\{ a^k b^{m - k} \}_{k = 0}^{m}$ are distinct
		and form an exponentially increasing sequence of consecutive ratio
		$a / b \ge 2$, so we may take $\ell = m + 1$
		in~\eqref{eq:moulton-qutrit}.
	\end{proof}
	
	The proof above is a direct adaptation of the qubit-$T$-state
	argument of~\cite[Theorem~3.1]{lovitz2022techniques}: the only
	technical change is the factor of $3$ in place of $4$ in the
	subset-sum coefficient count, accounting for cube-roots of unity in
	the qutrit canonical form rather than fourth-roots. What is new
	here is the state-specific verification: which qutrit orbits
	satisfy the modulus-ratio hypothesis ($\Hth$ and $\Norrell$
	do, as recorded above), and which provably do not. The
	$\Strange$-orbit exclusion is the latter; we record it as a
	separate rigidity proposition below.
	
	\begin{prop}\label{prop:strange-rigidity}
		Let $\mathcal{O}_S \subset \complex^3$ denote the projective
		Clifford orbit of $\Strange$ under $\Cl(1, 3)$. Then:
		(i)~every $\ket\phi \in \mathcal{O}_S$ has support of cardinality
		$2$ with the two nonzero amplitudes of equal modulus; and
		(ii)~$|\mathcal{O}_S| = 9$, with the orbit consisting of the three
		nontrivial $2$-element supports of $\F{3}$ and three discrete phase
		choices per support.
		Consequently, the modulus-ratio hypothesis of
		Proposition~\ref{prop:moulton-qutrit} fails at every Clifford
		representative of $\Strange$, and
		Proposition~\ref{prop:moulton-qutrit} yields no asymptotic lower
		bound on $\chiR(\Strange^{\otimes m})$.
	\end{prop}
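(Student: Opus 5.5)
The plan is to use the standard structure of the single-qutrit Clifford group: modulo global phases it is the semidirect product of the Weyl--Heisenberg (Pauli) group $\{X^a Z^b : a, b \in \F{3}\}$ with the metaplectic (Weil) representation of $\Sp(2,3) = \mathrm{SL}(2,3)$. We will show that the metaplectic part fixes $\Strange$ projectively. The orbit then reduces to the nine Pauli translates $X^a Z^b \Strange$, which we can list by hand.

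\emph{Step 1: the metaplectic part stabilizes $\Strange$.} Let $P$ be the parity operator $P\ket{x} = \ket{-x}$. It is the metaplectic image of $-I \in \mathrm{SL}(2,3)$, well defined up to phase. Since $-I$ is central in $\mathrm{SL}(2,3)$ and the Weil representation is a projective homomorphism, every metaplectic unitary $U$ satisfies $U P U^\dagger = e^{i\theta} P$. Because $P^2 = \I$, this phase is $\pm 1$; it must be $+1$, since $P$ has eigenvalues $1,1,-1$ and conjugation preserves the spectrum. Hence $U$ preserves each eigenspace of $P$. The $-1$ eigenspace is one-dimensional, spanned by $\ket 1 - \ket 2$, so $U\Strange \propto \Strange$. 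For concreteness one can also check this directly on the generators: the Fourier matrix $F$ satisfies $F^2 = P$, and the phase gate $S = \mathrm{diag}(1,1,\omega_3)$ commutes with $P$ up to the relevant phase.

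\emph{Step 2: the Pauli translates.} Writing a general Clifford as $C = e^{i\phi} X^a Z^b U$ with $U$ metaplectic, Step~1 gives $\mathcal O_S = \{ e^{i\phi} X^a Z^b \Strange \}$, so at most $9$ projective states. We compute them directly. $Z^b$ multiplies the amplitudes on $\ket1$ and $\ket2$ by $\omega_3^{b}$ and $\omega_3^{2b}$, so the vector becomes $\omega_3^b(\ket1 - \omega_3^{b}\ket2)/\sqrt2$. $X^a$ then cyclically shifts the support $\{1,2\}$ to $\{1+a, 2+a\}$. Every element therefore has two nonzero amplitudes of modulus $1/\sqrt2$, which proves (i).

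For (ii), states with different supports are projectively distinct. For a fixed support, the relative phase between the two amplitudes is $-\omega_3^{b}$, which takes three distinct values as $b$ ranges over $\F{3}$. Hence $|\mathcal O_S| = 9$, realized as three supports times three phases, as claimed.

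\emph{Consequence.} Every representative has both nonzero amplitudes of modulus $1/\sqrt2$, so the ratio $|a_i|/|a_j|$ is always $1 < 2$. The hypothesis of Proposition~\ref{prop:moulton-qutrit} therefore fails at every Clifford representative. Since the proposition is applied to a single-qutrit state, and choosing a different Clifford representative does not change the moduli, it yields no bound.

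The only step needing care is Step~1: we must justify that the metaplectic unitaries commute with $P$ exactly rather than anticommute or pick up some other phase. The spectral argument above settles this cleanly. A direct check on the generators $F$ and $S$ of $\mathrm{SL}(2,3)$ is an equally short alternative.
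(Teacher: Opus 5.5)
Your main argument is correct, and it takes a genuinely different route from the paper.

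\textbf{Comparison of routes.} The paper works bottom-up. It writes down the nine-element family $\ket{\phi_{a,b,c}} = (\ket a - \omegath^{c}\ket b)/\sqrt 2$ and checks by hand that each generator $X, Z, S, H$ maps the family into itself. The $H$ case needs a small computation locating the unique zero of $\omegath^{aj} - \omegath^{c+bj}$. The paper then counts, and obtains $|\mathrm{Stab}(\Strange)| = 24$ only afterwards, by orbit--stabilizer.

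You work top-down from $\Cl(1,3)/U(1) \cong \F{3}^2 \rtimes \mathrm{SL}(2,3)$. The state $\Strange$ spans the one-dimensional $-1$ eigenspace of the parity operator $P$. Every Weil unitary commutes with $P$: this follows from centrality of $-I$, plus your spectrum argument fixing the phase to $+1$. Hence the whole symplectic part stabilizes $\Strange$, and the orbit is just the nine Pauli translates.

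What your route buys: it explains why the orbit is so small, it identifies the stabilizer directly as the Weil image of $\mathrm{SL}(2,3)$ (order $24$, consistent with the paper), and it avoids the Hadamard computation. What it costs: it relies on the semidirect-product/Weil-representation structure theorem, whereas the paper needs only a generating set.

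\textbf{Error in the generator check.} Your parenthetical ``direct check on the generators $F$ and $S$'' is false as stated, and you should fix or drop it.
\begin{itemize}
\item The gate $S = \mathrm{diag}(1,1,\omegath)$ does not commute with $P$ up to any phase: $PSP = \mathrm{diag}(1,\omegath,1)$.
\item Consistently with this, $S\Strange = (\ket 1 - \omegath\ket 2)/\sqrt 2$ is a different point of the orbit. The paper's own generator list shows $S$ shifting the phase label $c$.
\item The reason is that $S$ is not the Weil image of the shear. In fact $S = \mathrm{diag}(1,\omegath^{2},\omegath^{2})\,Z$, and it is the factor $\mathrm{diag}(\omegath^{2x^2})_{x} = \mathrm{diag}(1,\omegath^2,\omegath^2)$ that commutes with $P$. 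That factor is even under $x \mapsto -x$, and it fixes $\Strange$ up to the phase $\omegath^2$.
\end{itemize}

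This is not cosmetic for Step~2. The unitaries $U$ in $C = e^{i\phi}X^aZ^bU$ must range over the genuine Weil image, not over the group generated by $F$ and $S$. That group already contains the Paulis: $P = F^2$ and $S\,(PSP)^{-1} = Z^{2}$. Under that reading Step~1 would be false. If you use $F$ and $SZ^{-1}$ as generators instead, the direct check goes through and agrees with your spectral argument.
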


	The full proof is in Appendix~\ref{app:strange-rigidity}.
	The same property that prevents the subset-sum argument for
	$\Strange$ is what makes $\Strange$ admit the smallest known
	exact-rank upper-bound exponent across the four orbits
	($\gamma_{\mathbb{S}} \le \log_3(2)/2$, by Theorem~\ref{thm:strange-m2});
	closing the gap for $\Strange$ requires a different lower-bound
	technique. The asymptotic growth
	$\Omega(m / \log m)$ established for $\Hth$ and $\Norrell$ above is
	weaker than the $\Omega(m)$ lower bound
	of~\cite{labib2022stabilizer} for a specific qudit-$T$-gate magic
	state, but applies to representatives of orbits that prior work
	does not cover. The algebraic-symmetry arguments underlying
	Proposition~\ref{prop:strange-rigidity} are reminiscent of the
	stabilizer-polytope symmetry analysis
	of~\cite{heinrich2019robustness}, which exploits the same
	Clifford-orbit structure to compute robustness-of-magic on small
	qudit systems.
	
	\section{Magic-state injection}\label{sec:injection}
	
	The stabilizer-rank exponents of Section~\ref{sec:bounds} bound the
	state-vector cost of the deterministic stabilizer-rank simulator
	of~\cite{bravyi2016trading} applied directly to $\ket M^{\otimes m}$.
	Converting them to a circuit-runtime advantage on a concrete
	Clifford-plus-$\ket M$ circuit additionally requires a model for
	consuming the magic-state ancilla. The standard choice is a
	\emph{deterministic injection gadget}: a fixed entangling two-qutrit
	Clifford that, on a chosen ancilla measurement outcome, applies a
	non-Clifford gate to the data and corrects the remaining outcomes
	via Clifford byproducts. For $T_3$ the diagonal qutrit $T$-rotation
	$\mathrm{diag}(1, \omega_9, \omega_9^2)$ supplies one, so
	$\gamma_{T_3} \le 1/2$ is a circuit-runtime bound.
	
	For the three non-$T_3$ orbits, standard single-shot injection
	via a diagonal gate requires the magic state to be a
	\emph{phase state} (i.e.\ have uniform amplitudes up to
	phases)~\cite{anwar2012qutrit, howard2012qudit,
	cui2017diagonal}. Since $\Strange$, $\Hth$, and $\Norrell$ all
	have non-uniform amplitudes, this route is
	closed~\cite{prakash2020magic, prakash2020contextual}.

	\subsection{Two-copy probabilistic conversion}\label{ssec:two-copy}

	A milder
	construction~\cite{bravyi2005universal, anwar2012qutrit},
	introduced for the qubit phase state $(\ket 0 + e^{i\pi/4} \ket 1)/\sqrt 2$
	(the standard ``T-state'' of the qubit distillation literature, in
	our convention an $H$-type representative) and adapted to the
	qutrit Hadamard-eigenstate, instead feeds two copies of $\ket{M}$
	into a two-qutrit Clifford, measures the second qutrit, and
	recovers a Clifford-equivalent phase state on the first qutrit on
	a designated branch. The recovered phase state is
	then injected via the standard single-shot gadget. The resulting
	pipeline consumes $2/q$ copies of $\ket{M}$ in expectation per
	non-Clifford gate, where $q$ is the branch success probability:
	an $O(1)$ ancilla overhead provided (a) such a Clifford exists
	and (b) the recovered phase state implements a non-Clifford gate
	under standard injection. The phase state that emerges in
	Theorem~\ref{thm:two-copy} below implements one specific
	non-Clifford diagonal gate per orbit (the $\Hth$ and $\Norrell$
	gates listed in~\eqref{eq:H3-output} and~\eqref{eq:N-output}); reaching other
	non-Clifford diagonal targets requires an additional
	gate-synthesis step on top of the two-copy primitive.

	A complete enumeration over $\Sp(4, \F{3}) \times \F{3}$ (the
	choice of Clifford and choice of measurement outcome, modulo the
	Heisenberg--Weyl prefactor as in
	Appendix~\ref{app:gadget-reduction}) settles the existence question
	for every non-$T_3$ orbit:

	\begin{theorem}\label{thm:two-copy}
		For $\ket{M} \in \{\Hth, \Norrell\}$ there exist two-qutrit
		Cliffords $C$ and measurement branches $k$ such that
		$\bra{k}_{\mathrm{anc}} C (\ket{M} \otimes \ket{M})$ is proportional
		to a phase state whose standard single-shot injection yields a
		non-Clifford diagonal gate. For $\ket{M} = \Strange$ no such
		protocol exists: every $C$ that maps $\Strange \otimes \Strange$
		to a phase state injects a Clifford gate.
	\end{theorem}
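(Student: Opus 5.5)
The plan is to reduce the statement to a finite check over two-qutrit Cliffords, then handle the two halves separately: an explicit witness for $\Hth$ and $\Norrell$, and an exhaustive (but structured) exclusion for $\Strange$. First I would use the gadget reduction of Appendix~\ref{app:gadget-reduction}. Any $C \in \Cl(2,3)$ factors, up to global phase, as a Heisenberg--Weyl operator times a metaplectic representative of an element of $\Sp(4,\F{3})$. The Heisenberg--Weyl prefactor acts on the data qutrit only by a Pauli after projection onto $\bra{k}_{\mathrm{anc}}$, and only relabels $k$ on the ancilla. A Pauli maps phase states to phase states, and Clifford diagonal gates to Clifford diagonal gates under injection. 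So it suffices to range over $\Sp(4,\F{3}) \times \F{3}$, a set of $51{,}840 \times 3$ pairs.

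I would then make ``injects a Clifford gate'' concrete. A normalized phase state $\frac{1}{\sqrt3}\sum_j e^{i\theta_j}\ket j$ injects $D=\mathrm{diag}(e^{i\theta_0},e^{i\theta_1},e^{i\theta_2})$, up to Pauli and Clifford byproducts. For $p=3$, $D$ is Clifford if and only if, after removing a global phase and possibly composing with a Pauli $X^a$ (which permutes the phases cyclically), $e^{i(\theta_j-\theta_0)} = \omegath^{\alpha j^2+\beta j}$ for some $\alpha,\beta\in\F{3}$. This gives a finite decidable test on each output vector: check equal moduli of the three components, then check whether the normalized phase ratios are cube roots of unity fitting a quadratic.

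For the existence half, I would exhibit one explicit pair $(C,k)$ for each of $\Hth$ and $\Norrell$. These are found by the enumeration and written as a product of $\mathrm{SUM}$, Fourier and phase gates. I would then compute $\bra{k}C(\ket M\otimes\ket M)$ symbolically. The output is a polynomial in the amplitudes of $\ket M$ of degree two, so this is a short hand computation. From it I would verify the uniform moduli, read off the branch weights $3/8$ and $1/4$, and confirm that the phase ratios are not of the form $\omegath^{\alpha j^2+\beta j}$. For $\Hth$, for instance, I expect phases involving $\sqrt3$, which are not cube roots of unity, so non-Cliffordness is immediate.

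The $\Strange$ half is the main obstacle, because it is a universal statement over all Cliffords. I would first try a conceptual argument. $\Strange^{\otimes 2}$ has amplitudes in $\frac12\integer$, and every qutrit Clifford matrix has entries in $\integer[\omegath]\cdot 3^{-s/2}$ up to a global phase. Hence every output $\bra kC(\Strange\otimes\Strange)$ has, up to a scalar, coordinates in $\mathbb{Q}(\omegath)$. Combined with Proposition~\ref{prop:strange-rigidity}, which says $\Strange$ has support of size $2$ with equal moduli, this suggests that the only way to reach uniform modulus over three coordinates is via interference that forces the phase ratios to lie among the roots of unity in $\mathbb{Q}(\omegath)$, namely the sixth roots of unity. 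Unit-modulus elements of $\mathbb{Q}(\omegath)$ need not be roots of unity, however, so this argument is incomplete. As a fallback I would complete it by the exhaustive enumeration over $\Sp(4,\F{3})\times\F{3}$, computed in exact arithmetic in $\mathbb{Q}(\omegath)$. For every output with uniform moduli, I would verify that the phase ratios pass the quadratic-form test. I would also record the list of such outputs as a certificate in \texttt{stabrank}.
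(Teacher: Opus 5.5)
Your plan is the paper's proof: reduce to $\Sp(4,\F{3})\times\F{3}$ via Lemma~\ref{lem:gadget-reduction}, exhibit search-found circuits for $\Hth$ and $\Norrell$ and verify their outputs directly, and settle $\Strange$ purely by exhaustive enumeration. The paper's outputs are $\propto \ket{0}+i\ket{1}+e^{i\pi/3}\ket{2}$ and $\propto \ket{0}-\ket{1}-\ket{2}$, non-Clifford because $i$ and $-1$ are not cube roots of unity, not because of phases ``involving $\sqrt{3}$''. Your quadratic-form test is equivalent to the paper's cube-root criterion, since every $f:\F{3}\to\F{3}$ with $f(0)=0$ has the form $\alpha j^2+\beta j$. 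Your abandoned $\mathbb{Q}(\omegath)$ heuristic could not have worked even if completed: the sixth roots of unity include $-1$, which already gives the non-Clifford gate $\mathrm{diag}(1,-1,-1)$.
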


	\begin{proof}
		By exhaustive search over the symplectic-quotient representation (Lemma~\ref{lem:gadget-reduction}), one identifies valid protocols for $\Hth$ and $\Norrell$. The raw search outputs the following two-qutrit Cliffords:
		\begin{equation}\label{eq:two-copy-protocols}
			\begin{aligned}
				C_{\Hth}    &= H_1 \mathrm{SUM} H_1 \mathrm{SUM} H_1 S_2 \mathrm{SUM} H_1 H_2^2, \\
				C_{\Norrell} &= H_1 H_2 \mathrm{SUM} H_1^\dagger H_2 \mathrm{SUM} H_1.
			\end{aligned}
		\end{equation}

		\paragraph{Hadamard-eigenstate protocol.}
		Let $N = \sqrt{3 - \sqrt{3}}$ and $c = (\sqrt{3}-1)/2$ (the same constants used throughout Appendix~\ref{app:h3}). The two-copy input state is $\Hth \otimes \Hth$, where the single-copy Hadamard-eigenstate is $\Hth = \frac{1}{N} \big(\ket{0} + c\ket{1} + c\ket{2}\big)$. Applying the two-qutrit Clifford $C_{\Hth}$ to this input state and projecting the ancilla (leg 2) onto the outcome $k = 1$ yields the data-qutrit output vector
		\begin{equation}\label{eq:H3-output}
			\ket{\psi_{k=1}} = \bra{1}_{\mathrm{anc}} C_{\Hth} (\Hth \otimes \Hth) = \frac{\lambda}{\sqrt{3}} \big(\ket{0} + e^{i\pi/2}\ket{1} + e^{i\pi/3}\ket{2}\big),
		\end{equation}
		where $\lambda \in \complex$ is a normalization scalar satisfying $|\lambda|^2 = 3/8$. Since all three amplitudes of the post-selected state have equal modulus, it is a phase state. The relative phases are $\arg(\psi_1 / \psi_0) = \pi/2$ and $\arg(\psi_2 / \psi_0) = \pi/3$, corresponding to the single-qutrit diagonal gate $\mathrm{diag}(1, i, e^{i\pi/3})$. Since a single-qutrit diagonal Clifford gate must have phases that are multiples of $2\pi/3$, the phase $\pi/2$ certifies that the injected gate is non-Clifford.

		\paragraph{Norrell protocol.}
		The two-copy input state is $\Norrell \otimes \Norrell$, where $\Norrell = \frac{1}{\sqrt{6}}(\ket{0} + \ket{1} - 2\ket{2})$ in the computational basis. First, the gate $H_1$ acts on the data qutrit as the qutrit Fourier transform, leaving the ancilla unchanged. On the amplitudes $v_{a,b} = c_N(a) c_N(b)/6$ (where $c_N = (1, 1, -2)^T$), the data Fourier transform produces
		\begin{equation}
			\ket{\Psi_1} = (H_1 \otimes \I) (\Norrell \otimes \Norrell) = \frac{1}{6\sqrt{3}} \sum_{j, b} \hat{c}(j) c_N(b) \ket{j, b},
		\end{equation}
		where the inner sum evaluates using $1 + \omega + \omega^2 = 0$ as
		\begin{equation}
			\hat{c}(j) = \sum_{a=0}^2 \omega^{aj} c_N(a) = 1 + \omega^j - 2\omega^{2j}.
		\end{equation}
		This evaluates to $\hat{c}(0) = 0$, $\hat{c}(1) = -3\omega^2$, and $\hat{c}(2) = -3\omega$. Thus, after $H_1$, only the data components with $j \in \{1, 2\}$ survive.

		Applying the remaining gates of $C_{\Norrell}$ and projecting onto the ancilla outcome $k = 0$ yields
		\begin{equation}\label{eq:N-output}
			\ket{\psi_{k=0}} = \bra{0}_{\mathrm{anc}} C_{\Norrell} (\Norrell \otimes \Norrell) = \frac{\mu}{\sqrt{3}} \big(\ket{0} - \ket{1} - \ket{2}\big),
		\end{equation}
		with $|\mu|^2 = 1/4$, confirming a success probability of $1/4$ to obtain a phase state. The relative phases are both $\pi$, defining the injected diagonal gate $\mathrm{diag}(1, -1, -1)$. Because $\pi$ is not a multiple of $2\pi/3$, this gate is non-Clifford.

		Finally, for $\ket{M} = \Strange$, an exhaustive search confirms that no two-copy conversion protocol exists: every $(C, k)$ pair that maps $\Strange \otimes \Strange$ to a phase state yields relative phases restricted to the Clifford-trivial ones.
	\end{proof}

	Unlike the qubit case~\cite[Sec.~III]{bravyi2005universal}, where the Hadamard gate naturally pair-conjugates real amplitudes to equalize their moduli, the qutrit Fourier transform has no such conjugate-pairing property. Modulus equalization for $\Hth$ and $\Norrell$ therefore requires the longer alternating sequences of Hadamards, SUMs, and phase gates in~\eqref{eq:two-copy-protocols}.

	Operationally, Theorem~\ref{thm:two-copy} provides a constant-overhead route from copies of $\ket M \in \{\Hth, \Norrell\}$ to one specific non-Clifford diagonal gate per orbit ($\mathrm{diag}(1, i, e^{i\pi/3})$ and $\mathrm{diag}(1, -1, -1)$ respectively). Arbitrary diagonal targets outside this image require additional gate-synthesis overhead. 
	
	For $\Strange$, the best upper-bound exponent ($\gamma_{\mathbb{S}} \le \log_3(2)/2$) remains operationally inaccessible under deterministic single-shot injection, two-copy probabilistic conversion and even three-copy protocols: an exhaustive search over the 3-qutrit symplectic group using coset pruning identifies no protocols to obtain a non-Clifford diagonal gate.
	
	\section{The \texttt{stabrank} library}\label{sec:stabrank}

	The numerical and verification work in this paper is carried out
	in \texttt{stabrank}, an open-source Python package with a C++
	core released with the manuscript. The library has a heuristic
	upper-bound stage (simulated-annealing decomposition search,
	hereafter SA) and two exhaustive stages (the $k$-tuple enumeration
	that produces rank lower-bound certificates and the
	symplectic-quotient sweep behind the gadget non-existence result of
	Section~\ref{sec:injection}). The SA search is treated as untrusted
	scaffolding: its output is a candidate canonical-form decomposition
	that is then replayed against an independent verifier; the search
	itself never enters the verification path.
	Table~\ref{tab:claims} maps each paper result to its producing
	artifact and audit channels, and Figure~\ref{fig:workflow}
	sketches the upper-bound pipeline.
	
	\begin{table}[!htbp]
		\centering
		\small
		\begin{tabular}{l l l}
			\toprule
			Result & Producing artifact & Auditing channel(s) \\
			\midrule
			Thm.~\ref{thm:strange-m2}        & SA decomposition          & numerical; Lean~4 \\
			Prop.~\ref{prop:moulton-qutrit}  & analytic proof            & --- \\
			Thm.~\ref{thm:m3-bound} ($\le 4$) & SA decomposition         & numerical; Lean~4 \\
			Thm.~\ref{thm:m3-bound} ($\ge 4$) & exhaustive triple search & enumeration contract; residual gap \\
			Gadget non-existence & $\Sp(4, \F{3})$ sweep & enumeration contract; branch-test replay \\
			Thm.~\ref{thm:two-copy} & two-copy $\Sp(4, \F{3})$ sweep & enumeration contract; gate-sequence replay \\
			Prop.~\ref{prop:strange-rigidity} ($\Strange$ rigidity) & analytic proof & --- \\
			$\Norrell^{\otimes 4} \le 7$ (App.~\ref{app:norrell-m4}) & SA decomposition & numerical; SymPy~\cite{meurer2017sympy}; Lean~4 \\
			\midrule
			$\chiR(\ket{H}^{\otimes 4}) \ge 4$~\cite{bravyi2016trading}  & exhaustive triple search (verification only) & enumeration contract; residual gap \\
			$\chiR(\ket{T}^{\otimes 2}) \le 2$  & SA decomposition & numerical \\
			$\chiR(\ket{T}^{\otimes 3}) \le 3$  & SA decomposition & numerical \\
			$\chiR(\ket{T}^{\otimes 4}) = 3$  & algebraic identity (App.~\ref{app:qubit-t}) & numerical \\
			\bottomrule
		\end{tabular}
		\caption{Result-to-artifact-to-verifier map for the paper's key claims, indicating the primary search method and validation channels (numerical, exact-rational, and Lean~4 machine-checking).}
		\label{tab:claims}
	\end{table}
	
	\begin{figure}[!htbp]
		\centering
		\resizebox{\textwidth}{!}{%
			\begin{tikzpicture}[
				font=\small,
				>={Stealth[length=2.6mm]},
				box/.style={draw, thick, rounded corners=2pt,
					fill=black!2, align=center, inner sep=5pt,
					minimum width=3.4cm, minimum height=1.3cm},
				cand/.style={draw, thick, rounded corners=2pt, dashed,
					fill=white, align=center, inner sep=5pt,
					minimum width=2.9cm, minimum height=1.6cm},
				vcell/.style={draw=black!55, thick, rounded corners=1pt,
					fill=black!2, align=left, inner sep=4pt,
					text width=2.6cm, minimum height=0.7cm,
					font=\scriptsize},
				arrlabel/.style={font=\footnotesize\itshape, midway, above, yshift=1pt}
				]
				\node[box] (target) at (0,0)
				{\textbf{Target}\\[-2pt]
					$\ket{M}^{\otimes m}$};
				
				\node[box] (s1) at (5, 1.2)
				{\textbf{SA decomposition}\\[-1pt]
					\footnotesize heuristic, parallel chains\\[-1pt]
					\footnotesize\itshape $\to$ upper bound};
				\node[box] (s2) at (5,-1.2)
				{\textbf{exhaustive $k$-tuple}\\[-1pt]
					\footnotesize complete, finite dictionary\\[-1pt]
					\footnotesize\itshape $\to$ lower bound};
				
				\node[cand] (cand) at (9.4, 0)
				{\textit{candidate}\\[-1pt]
					$(k, x_0, W, Q)$\\[-1pt]
					$+\{c_i\}$\\[-1pt]
					\footnotesize\color{black!60}canonical form};
				
				\node[vcell] (v1) at (13.5, 1)
				{\textbf{numerical re-check}\\$\|\cdot\|_2 \le 10^{-15}$};
				\node[vcell] (v2) at (13.5, 0)
				{\textbf{SymPy rational}\\exact, where avail.};
				\node[vcell] (v3) at (13.5,-1)
				{\textbf{Lean 4 + mathlib4}\\machine-checked};
				
				\draw[->,thick] (target.east) to[out=0,in=180] (s1.west);
				\draw[->,thick] (target.east) to[out=0,in=180] (s2.west);
				
				\draw[->,thick] (s1.east) to[out=0,in=180] (cand.north west);
				\draw[->,thick] (s2.east) to[out=0,in=180] (cand.south west);
				
				\draw[->,thick] (cand.east) -- node[arrlabel] {replay} ($(v2.west)+(-0.1,0)$);
				
				\draw[black!55, thick] ($(v1.north east)+(0.15,0)$)
				-- ++(0.18,0)
				-- ($(v3.south east)+(0.33,0)$)
				-- ++(-0.18,0);
				\node[font=\footnotesize\itshape, rotate=-90, anchor=south]
				at ($(v2.east)+(0.85,0)$) {verified output};
			\end{tikzpicture}%
		}
		\caption{The \texttt{stabrank} decomposition pipeline
			(upper-bound side). Simulated annealing returns a candidate
			canonical-form decomposition $(k, x_0, W, Q)$ with coefficients
			$\{c_i\}$, replayed through three independent verifiers:
			numerical at machine precision (all seven qutrit identities),
			exact-rational in SymPy~\cite{meurer2017sympy} (the heaviest
			identity, $\Norrell^{\otimes 4}$), and Lean~4 + mathlib4 (all
			seven qutrit identities). The exhaustive $k$-tuple enumeration
			and the symplectic-quotient sweep
			(Section~\ref{sec:injection}) instead emit JSON certificates
			audited via the path in Section~\ref{sec:stabrank}.}
		\label{fig:workflow}
	\end{figure}
	
	The Lean~4 formalizations re-derive the canonical-form amplitude
	identities from the parameters of Appendix~\ref{app:decomps}
	alone, with no dependency on the search or its floating-point
	residual. Soundness of the exhaustive certificates rests on a
	large residual gap: across every published certificate the
	smallest non-witness least-squares residual is $\approx 0.14$,
	nine orders of magnitude above the $10^{-10}$ tolerance used to
	flag a true witness. The $\Sp(4, \F{3})$ gadget sweep uses
	$\mathtt{atol} = 10^{-8}$ for proportionality-to-unitary and
	$\mathtt{atol} = 10^{-5}$ for Clifford-equivalence against the
	$216$-element single-qutrit Clifford group; recovering the $T_3$
	injection gadget as a positive control confirms calibration.
	
	\section{Conclusions}\label{sec:conclusions}

	We give the first nontrivial exact-rank upper-bound exponents for
	the three non-$T_3$ qutrit orbits ($\gamma_{\mathbb{S}} \le \log_3(2)/2$,
	$\gamma_{H_3}, \gamma_{\mathbb{N}} \le \log_3(4)/3$, all strictly
	below the $\gamma_{T_3} \le 1/2$ baseline
	of~\cite{kocia2021improved}), exhaustive small-$m$ tight values
	$\chiR(\ket M^{\otimes 3}) = 4$ for $\ket M \in \{\Strange, \Hth,
	\Norrell\}$, and the first $\Omega(m/\log m)$ asymptotic lower
	bounds for $\Hth$ and $\Norrell$ via a qutrit adaptation
	of~\cite{lovitz2022techniques}. Operationally, $\Hth$ and
	$\Norrell$ admit explicit two-copy probabilistic conversion to
	injectable phase states. In contrast, $\Strange$ is rigid:
	exhaustive searches over the 2-qutrit and 3-qutrit symplectic groups
	(using pruning) confirm that no conversion protocol exists.
	Thus, the smaller upper-bound exponent of $\Strange$ remains
	operationally inaccessible under low-copy consumption.

	An interesting open direction is to establish asymptotic lower bounds
	for $\Strange$, where the subset-sum obstruction technique
	used for the other qutrit orbits does not directly apply.
	Standard stabilizer-nullity or dyadic-monotone techniques
	(e.g., following~\cite{beverland2020lower}) might offer a starting
	point, though their applicability to $\Strange$ remains to be
	fully explored. Similarly, while approximate stabilizer rank lower
	bounds have been studied in the qubit setting (such as the
	probabilistic bounds in~\cite{mehraban2023quadratic}), extending
	these results to the qutrit setting is another potential avenue of
	inquiry. Finally, it remains an open question whether a physical
	Clifford conversion protocol for $\Strange$ is possible for any
	number of copies, or if the stabilizer rank exponent can be made
	operationally useful via alternative consumption models~\cite{raussendorf2001one,
	hall2007cluster, booth2021outcome, mackeprang2022power}.
	
	\section*{Acknowledgements}

	We thank David Gosset for an early look at the manuscript. We used
	the generative AI tool Claude Opus 4.7 during code development for
	the \texttt{stabrank} library, including the Lean~4 + mathlib4
	formalizations of the qutrit decomposition identities of
	Appendix~\ref{app:decomps}. This work was supported by the U.S.
	Department of Energy, Office of Science, Office of Advanced
	Scientific Computing Research, Accelerated Research in Quantum
	Computing under Award Number DE-SC0025336. This material is also
	based upon work supported by the U.S. Department of Energy, Office
	of Science, National Quantum Information Science Research Centers,
	Quantum Science Center. FL acknowledges support from the European
	Union through the QLASS project (EU Horizon Europe grant agreement
	101135876). Views and opinions expressed are however those of the
	authors only and do not necessarily reflect those of the European
	Union. Neither the European Union nor the granting authority can be
	held responsible for them.
	
	\bibliographystyle{alpha}
	\bibliography{refs}

@article{anwar2012qutrit,
  author       = {Anwar, Hussain and Campbell, Earl T. and Browne, Dan E.},
  title        = {Qutrit magic state distillation},
  journal      = {New J. Phys.},
  volume       = {14},
  pages        = {063006},
  year         = {2012},
  doi          = {10.1088/1367-2630/14/6/063006},
  eprint       = {1202.2326},
  archivePrefix= {arXiv},
  primaryClass = {quant-ph},
}

@article{appleby2005symmetric,
  author       = {Appleby, D. M.},
  title        = {Symmetric informationally complete-positive operator valued measures and the extended {C}lifford group},
  journal      = {J. Math. Phys.},
  volume       = {46},
  pages        = {052107},
  year         = {2005},
  doi          = {10.1063/1.1896384},
  eprint       = {quant-ph/0412001},
  archivePrefix= {arXiv},
  primaryClass = {quant-ph},
}

@article{hostens2005stabilizer,
  author       = {Hostens, Erik and Dehaene, Jeroen and De Moor, Bart},
  title        = {Stabilizer states and {C}lifford operations for systems of arbitrary dimensions and modular arithmetic},
  journal      = {Phys. Rev. A},
  volume       = {71},
  pages        = {042315},
  year         = {2005},
  doi          = {10.1103/PhysRevA.71.042315},
  eprint       = {quant-ph/0408190},
  archivePrefix= {arXiv},
  primaryClass = {quant-ph},
}

@misc{booth2021outcome,
  author       = {Booth, Robert I. and Kissinger, Aleks and Markham, Damian},
  title        = {Outcome determinism in measurement-based quantum computation with qudits},
  year         = {2021},
  eprint       = {2109.13810},
  archivePrefix= {arXiv},
  primaryClass = {quant-ph},
}

@article{bravyi2005universal,
  author       = {Bravyi, Sergey and Kitaev, Alexei},
  title        = {Universal quantum computation with ideal {C}lifford gates and noisy ancillas},
  journal      = {Phys. Rev. A},
  volume       = {71},
  pages        = {022316},
  year         = {2005},
  doi          = {10.1103/PhysRevA.71.022316},
  eprint       = {quant-ph/0403025},
  archivePrefix= {arXiv},
  primaryClass = {quant-ph},
}

@article{bravyi2016improved,
  author       = {Bravyi, Sergey and Gosset, David},
  title        = {Improved classical simulation of quantum circuits dominated by {C}lifford gates},
  journal      = {Phys. Rev. Lett.},
  volume       = {116},
  pages        = {250501},
  year         = {2016},
  doi          = {10.1103/PhysRevLett.116.250501},
  eprint       = {1601.07601},
  archivePrefix= {arXiv},
  primaryClass = {quant-ph},
}

@article{bravyi2016trading,
  author       = {Bravyi, Sergey and Smith, Graeme and Smolin, John A.},
  title        = {Trading classical and quantum computational resources},
  journal      = {Phys. Rev. X},
  volume       = {6},
  pages        = {021043},
  year         = {2016},
  doi          = {10.1103/PhysRevX.6.021043},
  eprint       = {1506.01396},
  archivePrefix= {arXiv},
  primaryClass = {quant-ph},
}

@article{bravyi2019simulation,
  author       = {Bravyi, Sergey and Browne, Dan and Calpin, Padraic and Campbell, Earl and Gosset, David and Howard, Mark},
  title        = {Simulation of quantum circuits by low-rank stabilizer decompositions},
  journal      = {Quantum},
  volume       = {3},
  pages        = {181},
  year         = {2019},
  doi          = {10.22331/q-2019-09-02-181},
  eprint       = {1808.00128},
  archivePrefix= {arXiv},
  primaryClass = {quant-ph},
}

@article{campbell2012magic,
  author       = {Campbell, Earl T. and Anwar, Hussain and Browne, Dan E.},
  title        = {Magic-state distillation in all prime dimensions using quantum {R}eed--{M}uller codes},
  journal      = {Phys. Rev. X},
  volume       = {2},
  pages        = {041021},
  year         = {2012},
  doi          = {10.1103/PhysRevX.2.041021},
  eprint       = {1205.3104},
  archivePrefix= {arXiv},
  primaryClass = {quant-ph},
}

@article{hall2007cluster,
  author       = {Hall, William},
  title        = {Cluster state quantum computation for many-level systems},
  journal      = {Phys. Rev. A},
  volume       = {75},
  pages        = {062321},
  year         = {2007},
  doi          = {10.1103/PhysRevA.75.062321},
  eprint       = {quant-ph/0512130},
  archivePrefix= {arXiv},
  primaryClass = {quant-ph},
}

@article{huang2019approximate,
  author       = {Huang, Yifei and Love, Peter},
  title        = {Approximate stabilizer rank and improved weak simulation of {C}lifford-dominated circuits for qudits},
  journal      = {Phys. Rev. A},
  volume       = {99},
  pages        = {052307},
  year         = {2019},
  doi          = {10.1103/PhysRevA.99.052307},
  eprint       = {1808.02406},
  archivePrefix= {arXiv},
  primaryClass = {quant-ph},
}

@article{veitch2014resource,
  author       = {Veitch, Victor and Mousavian, S. A. Hamed and Gottesman, Daniel and Emerson, Joseph},
  title        = {The resource theory of stabilizer quantum computation},
  journal      = {New J. Phys.},
  volume       = {16},
  pages        = {013009},
  year         = {2014},
  doi          = {10.1088/1367-2630/16/1/013009},
  eprint       = {1307.7171},
  archivePrefix= {arXiv},
  primaryClass = {quant-ph},
}

@article{jain2020qutrit,
  author       = {Jain, Akalank and Prakash, Shiroman},
  title        = {Qutrit and ququint magic states},
  journal      = {Phys. Rev. A},
  volume       = {102},
  pages        = {042409},
  year         = {2020},
  doi          = {10.1103/PhysRevA.102.042409},
  eprint       = {2003.07164},
  archivePrefix= {arXiv},
  primaryClass = {quant-ph},
}

@misc{kalra2025stabilizer,
  author       = {Kalra, Amolak Ratan and Sinha, Pulkit},
  title        = {Stabilizer Ranks, {B}arnes-{W}all Lattices and Magic Monotones},
  year         = {2025},
  eprint       = {2503.04101},
  archivePrefix= {arXiv},
  primaryClass = {quant-ph},
}

@article{kocia2021improved,
  author       = {Kocia, Lucas and Sarovar, Mohan},
  title        = {Improved simulation of quantum circuits by fewer {G}aussian eliminations},
  journal      = {Phys. Rev. A},
  volume       = {103},
  pages        = {022603},
  year         = {2021},
  doi          = {10.1103/PhysRevA.103.022603},
  eprint       = {2003.01130},
  archivePrefix= {arXiv},
  primaryClass = {quant-ph},
}

@article{labib2022stabilizer,
  author       = {Labib, Farrokh},
  title        = {Stabilizer rank and higher-order {F}ourier analysis},
  journal      = {Quantum},
  volume       = {6},
  pages        = {645},
  year         = {2022},
  doi          = {10.22331/q-2022-02-09-645},
  eprint       = {2107.10551},
  archivePrefix= {arXiv},
  primaryClass = {quant-ph},
}

@article{lovitz2022techniques,
  author       = {Lovitz, Benjamin and Steffan, Vincent},
  title        = {New techniques for bounding stabilizer rank},
  journal      = {Quantum},
  volume       = {6},
  pages        = {692},
  year         = {2022},
  doi          = {10.22331/q-2022-04-20-692},
  eprint       = {2110.07781},
  archivePrefix= {arXiv},
  primaryClass = {quant-ph},
}

@article{peleg2022lower,
  author       = {Peleg, Shir and Shpilka, Amir and Volk, Ben Lee},
  title        = {Lower Bounds on Stabilizer Rank},
  journal      = {Quantum},
  volume       = {6},
  pages        = {652},
  year         = {2022},
  doi          = {10.22331/q-2022-02-15-652},
  eprint       = {2106.03214},
  archivePrefix= {arXiv},
  primaryClass = {quant-ph},
}

@misc{mackeprang2022power,
  author       = {Mackeprang, Jelena and Bhatti, Daniel and Hoban, Matty J.},
  title        = {The power of qutrits for non-adaptive measurement-based quantum computing},
  year         = {2022},
  eprint       = {2203.12411},
  archivePrefix= {arXiv},
  primaryClass = {quant-ph},
}

@misc{mehraban2023quadratic,
  author       = {Mehraban, Saeed and Tahmasbi, Mehrdad},
  title        = {Quadratic Lower bounds on the Approximate Stabilizer Rank: A Probabilistic Approach},
  year         = {2023},
  eprint       = {2305.10277},
  archivePrefix= {arXiv},
  primaryClass = {quant-ph},
}

@inproceedings{mathlib2020,
  author       = {{The mathlib Community}},
  title        = {The {L}ean mathematical library},
  booktitle    = {Proceedings of the 9th ACM SIGPLAN International Conference on Certified Programs and Proofs},
  series       = {CPP 2020},
  pages        = {367--381},
  year         = {2020},
  doi          = {10.1145/3372885.3373824},
}

@inproceedings{moura2021lean4,
  author       = {de Moura, Leonardo and Ullrich, Sebastian},
  title        = {The {L}ean 4 Theorem Prover and Programming Language},
  booktitle    = {Automated Deduction -- CADE 28},
  pages        = {625--635},
  year         = {2021},
  doi          = {10.1007/978-3-030-79876-5_37},
}

@article{prakash2020magic,
  author       = {Prakash, Shiroman},
  title        = {Magic state distillation with the ternary {G}olay code},
  journal      = {Proc. R. Soc. A},
  volume       = {476},
  pages        = {20200187},
  year         = {2020},
  doi          = {10.1098/rspa.2020.0187},
  eprint       = {2003.02717},
  archivePrefix= {arXiv},
  primaryClass = {quant-ph},
}

@phdthesis{qassim2020classical,
  author       = {Qassim, Hammam},
  title        = {Classical Simulations of Quantum Systems Using Stabilizer Decompositions},
  school       = {University of Waterloo},
  year         = {2020},
}

@article{qassim2021improved,
  author       = {Qassim, Hammam and Pashayan, Hakop and Gosset, David},
  title        = {Improved upper bounds on the stabilizer rank of magic states},
  journal      = {Quantum},
  volume       = {5},
  pages        = {606},
  year         = {2021},
  doi          = {10.22331/q-2021-12-20-606},
  eprint       = {2106.07740},
  archivePrefix= {arXiv},
  primaryClass = {quant-ph},
}

@article{raussendorf2001one,
  author       = {Raussendorf, Robert and Briegel, Hans J.},
  title        = {A one-way quantum computer},
  journal      = {Phys. Rev. Lett.},
  volume       = {86},
  pages        = {5188--5191},
  year         = {2001},
  doi          = {10.1103/PhysRevLett.86.5188},
}

@article{meurer2017sympy,
  author       = {Meurer, Aaron and Smith, Christopher P. and Paprocki, Mateusz
                  and {\v{C}}ert{\'i}k, Ond{\v{r}}ej and Kirpichev, Sergey B.
                  and Rocklin, Matthew and Kumar, Amit and Ivanov, Sergiu
                  and Moore, Jason K. and Singh, Sartaj and Rathnayake, Thilina
                  and Vig, Sean and Granger, Brian E. and Muller, Richard P.
                  and Bonazzi, Francesco and Gupta, Harsh and Vats, Shivam
                  and Johansson, Fredrik and Pedregosa, Fabian and Curry, Matthew J.
                  and Terrel, Andy R. and Rou{\v{c}}ka, {\v{S}}t{\v{e}}p{\'a}n
                  and Saboo, Ashutosh and Fernando, Isuru and Kulal, Sumith
                  and Cimrman, Robert and Scopatz, Anthony},
  title        = {{SymPy}: {S}ymbolic computing in {P}ython},
  journal      = {PeerJ Computer Science},
  volume       = {3},
  pages        = {e103},
  year         = {2017},
  doi          = {10.7717/peerj-cs.103},
}

@article{zhu2010sic,
  author       = {Zhu, Huangjun},
  title        = {{SIC} {POVM}s and {C}lifford groups in prime dimensions},
  journal      = {J. Phys. A: Math. Theor.},
  volume       = {43},
  pages        = {305305},
  year         = {2010},
  doi          = {10.1088/1751-8113/43/30/305305},
}

@article{gottesman1999fault,
  author       = {Gottesman, Daniel},
  title        = {Fault-tolerant quantum computation with higher-dimensional systems},
  journal      = {Chaos, Solitons \& Fractals},
  volume       = {10},
  pages        = {1749--1758},
  year         = {1999},
  doi          = {10.1016/S0960-0779(98)00218-5},
  eprint       = {quant-ph/9802007},
  archivePrefix= {arXiv},
  primaryClass = {quant-ph},
}

@misc{gottesman1998heisenberg,
  author       = {Gottesman, Daniel},
  title        = {The {H}eisenberg representation of quantum computers},
  year         = {1998},
  eprint       = {quant-ph/9807006},
  archivePrefix= {arXiv},
  primaryClass = {quant-ph},
}

@article{aaronson2004improved,
  author       = {Aaronson, Scott and Gottesman, Daniel},
  title        = {Improved simulation of stabilizer circuits},
  journal      = {Phys. Rev. A},
  volume       = {70},
  pages        = {052328},
  year         = {2004},
  doi          = {10.1103/PhysRevA.70.052328},
  eprint       = {quant-ph/0406196},
  archivePrefix= {arXiv},
  primaryClass = {quant-ph},
}

@article{veitch2012negative,
  author       = {Veitch, Victor and Ferrie, Christopher and Gross, David and Emerson, Joseph},
  title        = {Negative quasi-probability as a resource for quantum computation},
  journal      = {New J. Phys.},
  volume       = {14},
  pages        = {113011},
  year         = {2012},
  doi          = {10.1088/1367-2630/14/11/113011},
  eprint       = {1201.1256},
  archivePrefix= {arXiv},
  primaryClass = {quant-ph},
}

@article{howard2014contextuality,
  author       = {Howard, Mark and Wallman, Joel and Veitch, Victor and Emerson, Joseph},
  title        = {Contextuality supplies the `magic' for quantum computation},
  journal      = {Nature},
  volume       = {510},
  pages        = {351--355},
  year         = {2014},
  doi          = {10.1038/nature13460},
  eprint       = {1401.4174},
  archivePrefix= {arXiv},
  primaryClass = {quant-ph},
}

@article{howard2012qudit,
  author       = {Howard, Mark and Vala, Jiri},
  title        = {Qudit versions of the qubit $\pi/8$ gate},
  journal      = {Phys. Rev. A},
  volume       = {86},
  pages        = {022316},
  year         = {2012},
  doi          = {10.1103/PhysRevA.86.022316},
  eprint       = {1206.1598},
  archivePrefix= {arXiv},
  primaryClass = {quant-ph},
}

@article{dawkins2015qutrit,
  author       = {Dawkins, Hillary and Howard, Mark},
  title        = {Qutrit magic state distillation tight in some directions},
  journal      = {Phys. Rev. Lett.},
  volume       = {115},
  pages        = {030501},
  year         = {2015},
  doi          = {10.1103/PhysRevLett.115.030501},
  eprint       = {1504.05965},
  archivePrefix= {arXiv},
  primaryClass = {quant-ph},
}

@article{prakash2020contextual,
  author       = {Prakash, Shiroman and Gupta, Aashi},
  title        = {Contextual bound states for qudit magic state distillation},
  journal      = {Phys. Rev. A},
  volume       = {101},
  pages        = {010303},
  year         = {2020},
  doi          = {10.1103/PhysRevA.101.010303},
  eprint       = {1905.00392},
  archivePrefix= {arXiv},
  primaryClass = {quant-ph},
}

@article{beverland2020lower,
  author       = {Beverland, Michael and Campbell, Earl and Howard, Mark and Kliuchnikov, Vadym},
  title        = {Lower bounds on the non-{C}lifford resources for quantum computations},
  journal      = {Quantum Sci. Technol.},
  volume       = {5},
  pages        = {035009},
  year         = {2020},
  doi          = {10.1088/2058-9565/ab8963},
  eprint       = {1904.01124},
  archivePrefix= {arXiv},
  primaryClass = {quant-ph},
}

@article{heinrich2019robustness,
  author       = {Heinrich, Markus and Gross, David},
  title        = {Robustness of magic and symmetries of the stabiliser polytope},
  journal      = {Quantum},
  volume       = {3},
  pages        = {132},
  year         = {2019},
  doi          = {10.22331/q-2019-04-08-132},
  eprint       = {1807.10296},
  archivePrefix= {arXiv},
  primaryClass = {quant-ph},
}

@article{cui2017diagonal,
  author       = {Cui, Shawn X. and Gottesman, Daniel and Krishna, Anirudh},
  title        = {Diagonal gates in the {C}lifford hierarchy},
  journal      = {Phys. Rev. A},
  volume       = {95},
  pages        = {012329},
  year         = {2017},
  doi          = {10.1103/PhysRevA.95.012329},
  eprint       = {1608.06596},
  archivePrefix= {arXiv},
  primaryClass = {quant-ph},
}

@misc{stabrank-software,
  author       = {Labib, Farrokh and Russo, Vincent},
  title        = {\texttt{stabrank}: a library for stabilizer-rank decompositions and certificates},
  year         = {2026},
  howpublished = {\url{https://github.com/unitaryfoundation/stabrank}},
}
	
	\appendix
	\section{Symplectic-quotient reduction}\label{app:gadget-reduction}

	Reduction of the search domain
	$\Cl(2, 3) = H(2, 3) \rtimes \Sp(4, \F{3})$ to its symplectic
	quotient, used in the gadget non-existence result and
	Theorem~\ref{thm:two-copy}. For an entangling two-qutrit Clifford
	$C \in \Cl(2, 3)$ and ancilla outcome $k \in \F{3}$, write
	\begin{equation}\label{eq:Ek-def}
		E_k(C, \ket M) =
		(\I_{\mathrm{data}} \otimes \bra{k}_{\mathrm{anc}}) C		(\I_{\mathrm{data}} \otimes \ket{M})
		\in \L(\complex^3)
	\end{equation}
	for the unnormalized data-register linear operator implemented on
	measurement branch $k$. In the two-copy conversion case of
	Theorem~\ref{thm:two-copy} the data register is also prepared in
	$\ket M$, so the branch-$k$ data-register output state is the
	vector $E_k(C, \ket M) \ket M \in \complex^3$.

	\begin{lemma}\label{lem:gadget-reduction}
		Fix $\ket{M} \in \complex^3$ and write $C_{\mathrm{ent}} =
		D C_{\mathrm{sp}}$ with $D \in H(2, 3)$ and $C_{\mathrm{sp}} \in
		\Sp(4, \F{3})$. Decompose $D$ into its data-leg and ancilla-leg
		factors $D = D_{\mathrm{data}} \otimes D_{\mathrm{anc}}$, with
		$D_{\mathrm{anc}} = X^{a} Z^{b}$ for some $(a, b) \in \F{3}^2$.
		Then for every $k \in \F{3}$,
		\begin{equation}\label{eq:gadget-reduction}
			E_k(C_{\mathrm{ent}}, \ket{M})
			= \omega_3^{b(k - a)} D_{\mathrm{data}} E_{k - a}(C_{\mathrm{sp}}, \ket{M}).
		\end{equation}
		Consequently, the existence of a valid deterministic injection
		gadget for $\ket{M}$ depends only on the $\Sp(4, \F{3})$ coset of
		$C_{\mathrm{ent}}$; the same applies to the existence of a valid
		two-copy conversion protocol in the sense of
		Section~\ref{ssec:two-copy}.
	\end{lemma}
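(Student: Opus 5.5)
The plan is a direct computation: substitute the factorization $C_{\mathrm{ent}} = D C_{\mathrm{sp}}$ with $D = D_{\mathrm{data}} \otimes X^a Z^b$ into the definition~\eqref{eq:Ek-def}, then move the ancilla-leg Pauli onto the bra $\bra{k}_{\mathrm{anc}}$. Because $D$ is a tensor product, we have
\[
(\I \otimes \bra{k}) (D_{\mathrm{data}} \otimes X^a Z^b)
= D_{\mathrm{data}} \otimes \bra{k} X^a Z^b .
\]
So the whole statement reduces to a single-qutrit identity for the covector $\bra{k} X^a Z^b$.

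For that identity I will use the conventions $X\ket{j} = \ket{j+1}$ and $Z\ket{j} = \omega_3^{j}\ket{j}$. These give $\bra{k} X^a = \bra{k-a}$, and then $\bra{k-a} Z^b = \omega_3^{b(k-a)} \bra{k-a}$. Hence
\[
\bra{k} X^a Z^b = \omega_3^{b(k-a)} \bra{k-a}.
\]
Substituting back gives
\[
E_k(C_{\mathrm{ent}},\ket M) = \omega_3^{b(k-a)} D_{\mathrm{data}} (\I\otimes\bra{k-a}) C_{\mathrm{sp}} (\I\otimes\ket M) = \omega_3^{b(k-a)} D_{\mathrm{data}} E_{k-a}(C_{\mathrm{sp}},\ket M),
\]
which is~\eqref{eq:gadget-reduction}. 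Two bookkeeping points need care. First, the ordering convention: if the decomposition is written with $X^aZ^b$ versus $Z^bX^a$, the phase changes by $\omega_3^{ab}$, so I will fix the stated ordering. Second, any global phase in $D \in H(2,3)$ can be absorbed into $D_{\mathrm{data}}$, since the Heisenberg--Weyl group factorizes into single-leg Paulis up to scalars.

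For the consequence, I will argue that validity of a deterministic gadget, or of a two-copy conversion branch, depends only on properties of $E_k$ that are invariant under the transformation above. These properties are: proportionality to a unitary (respectively, the output vector $E_k\ket M$ being a phase state), and whether the implemented gate is Clifford. Left multiplication by a nonzero scalar preserves all of them. Left multiplication by the single-qutrit Pauli $D_{\mathrm{data}}$ also preserves them, for three reasons:
\begin{itemize}
\item it maps unitaries to unitaries;
\item it permutes and rephases amplitudes, so it preserves equal moduli;
\item it is Clifford, so it preserves the Clifford versus non-Clifford status of the injected gate, which can be defined up to Clifford byproducts.
\end{itemize}
The relabeling $k \mapsto k-a$ is a bijection of $\F{3}$. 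So the set of branches and gates realized by $C_{\mathrm{ent}}$ coincides, up to Pauli corrections, with that realized by $C_{\mathrm{sp}}$.

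The main obstacle is not the algebra but making the notion ``depends only on the coset'' precise. One part is the semidirect-product convention: for $D C_{\mathrm{sp}}$ versus $C_{\mathrm{sp}} D$, a right Pauli factor would instead act on the input $\ket M$ and change the data-register state itself. The lemma sidesteps this by using the left coset. The other part is checking that the validity criteria in Section~\ref{ssec:two-copy} are stated modulo Clifford equivalence, so that the Pauli factor $D_{\mathrm{data}}$ is harmless.
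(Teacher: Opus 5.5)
Your proposal is correct and follows the paper's proof essentially verbatim: you push $\bra{k}_{\mathrm{anc}}$ through the tensor-product Pauli via $\bra{k}X^aZ^b = \omega_3^{b(k-a)}\bra{k-a}$, then argue that a scalar factor, left multiplication by the single-qutrit Pauli $D_{\mathrm{data}}$, and the relabeling $k\mapsto k-a$ all preserve proportionality to a unitary, the phase-state property, and the Clifford versus non-Clifford status of the injected gate. Your left-versus-right coset caveat is harmless but unnecessary: $H(2,3)$ is normal in $\Cl(2,3)$, so any $C_{\mathrm{sp}}D'$ can be rewritten (up to phase) as $(C_{\mathrm{sp}}D'C_{\mathrm{sp}}^\dagger)\,C_{\mathrm{sp}}$, which puts the Pauli on the left.
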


	\begin{proof}
		Using the identity
		$\bra{k} X^{a} Z^{b} = \omega_3^{b(k - a)} \bra{k - a}$,
		\begin{equation}
			\begin{aligned}
				E_k(C_{\mathrm{ent}}, \ket{M})
				&= \bra{k}_{\mathrm{anc}} (D_{\mathrm{data}} \otimes D_{\mathrm{anc}})
				 C_{\mathrm{sp}} (\I \otimes \ket{M}) \\
				&= D_{\mathrm{data}} (\bra{k} D_{\mathrm{anc}})_{\mathrm{anc}}
				 C_{\mathrm{sp}} (\I \otimes \ket{M}) \\
				&= \omega_3^{b(k - a)} D_{\mathrm{data}} E_{k - a}(C_{\mathrm{sp}}, \ket{M}).
			\end{aligned}
		\end{equation}
		The prefactor $D_{\mathrm{data}}$ is a single-qutrit Pauli, and
		$k \mapsto k - a$ permutes $\F{3}$, so the multisets
		$\{E_k(C_{\mathrm{ent}}, M)\}_{k}$ and $\{E_k(C_{\mathrm{sp}},
		M)\}_{k}$ coincide up to a global Pauli prefactor and
		branch-dependent phases. Proportionality to a unitary, the
		Clifford-byproduct condition, and the
		Clifford-equivalent-to-phase-state condition are all invariant
		under both transformations.
	\end{proof}

	The reduction shows that whether a valid gadget exists is a
	property of the $\Sp(4, \F{3})$ coset, not of any particular
	representative. It does not say anything about gate count: the
	circuits in~\eqref{eq:two-copy-protocols} are the symplectic
	representatives returned by the search, and a different coset
	representative might admit a shorter decomposition.

	\section{Proof of Proposition~\ref{prop:strange-rigidity}}\label{app:strange-rigidity}

	\begin{proof}[Proof of Proposition~\ref{prop:strange-rigidity}]
		For (i), $\Cl(1, 3)$ is generated by the qutrit Pauli operators
		$X$ (shift: $X \ket{j} = \ket{j + 1}$) and $Z$ (clock:
		$Z \ket{j} = \omega^j \ket{j}$), the phase gate
		$S = \mathrm{diag}(1, 1, \omega)$, and the qutrit Hadamard
		$H_{jk} = \omega^{jk}/\sqrt 3$. Any
		$\ket\phi \in \mathcal{O}_S$ has the form
		$\ket{\phi_{a,b,c}} = (\ket a - \omega^c \ket b)/\sqrt 2$ for some
		two-element support $\{a, b\} \subset \F{3}$ with $a \ne b$ and
		some phase exponent $c \in \F{3}$; this pattern is preserved by
		each generator on every such state:
		\begin{itemize}
			\item $X$ sends $\ket{\phi_{a, b, c}}$ to
			$\ket{\phi_{a+1, b+1, c}}$.
			\item $Z$ sends $\ket{\phi_{a, b, c}}$ to
			$\omega^a \ket{\phi_{a, b, c + (b - a)}}$.
			\item $S$ acts trivially on $\ket{j}$ for $j \in \{0, 1\}$ and as
			$\omega$ on $\ket{2}$; it sends $\ket{\phi_{a, b, c}}$ to
			$\ket{\phi_{a, b, c + [b = 2] - [a = 2]}}$.
			\item $H$ sends $\ket{\phi_{a, b, c}}$ to
			\begin{equation*}
				H \ket{\phi_{a,b,c}} = \frac{1}{\sqrt 6} \sum_{j \in \F{3}}
				\big(\omega^{a j} - \omega^{c + b j}\big) \ket{j}.
			\end{equation*}
			The coefficient at $\ket{j}$ vanishes iff
			$(a - b) j \equiv c \pmod 3$, which has a unique solution
			$j^\ast \in \F{3}$ since $a \ne b$. For the other two indices
			$j_1, j_2$, factor each coefficient as
			$\omega^{a j}(1 - \omega^{k_j})$ with $k_j = c + (b - a) j$;
			then $k_{j_1}$ and $k_{j_2}$ take the values $1$ and $2$ in
			some order. Since $|1 - \omega| = |1 - \omega^2| = \sqrt 3$,
			the two nonzero amplitudes each have modulus $1/\sqrt 2$, and
			their ratio takes the form $-\omega^{c'}$ for some
			$c' \in \F{3}$ (using $(1 - \omega^2)/(1 - \omega) = -\omega^2$).
			Therefore $H \ket{\phi_{a, b, c}}$ is, up to a global phase,
			$\ket{\phi_{j_1, j_2, c'}}$.
		\end{itemize}
		The pattern is thus closed under the generators and so under all
		of $\Cl(1, 3)$.
		
		For (ii), the three-parameter family $\{\ket{\phi_{a, b, c}}\}$
		has at most $\binom{3}{2} \times 3 = 9$ projective elements (three
		unordered $2$-element supports, three phase exponents
		$c \in \F{3}$ each). The $X$-action shows all three supports are
		reached from $\Strange = \ket{\phi_{1, 2, 0}}$, and on a fixed
		support the three values of $c$ give projectively distinct
		states: $\ket{\phi_{a, b, c}}$ and $\ket{\phi_{a, b, c'}}$ are
		projectively equal iff $\omega^c = \omega^{c'}$ (the ratio of
		the $\ket b$ amplitudes), iff $c = c'$ in $\F{3}$. Hence
		$|\mathcal{O}_S| = 9$ and
		$|\mathrm{Stab}(\Strange)| = 216/9 = 24$ by orbit-stabilizer.

		The final claim is immediate from (i): a state with support
		cardinality $2$ and equal-modulus nonzero amplitudes cannot have
		two amplitudes of differing moduli, so the
		$|a_i|/|a_j| \ge 2$ hypothesis of
		Proposition~\ref{prop:moulton-qutrit} fails for every
		$\ket\phi \in \mathcal{O}_S$. Since the stabilizer rank is
		invariant under the single-qutrit Clifford action, testing the
		hypothesis on $\mathcal{O}_S$ exhausts the freedom available, and
		Proposition~\ref{prop:moulton-qutrit} yields no
		$\chiR(\Strange^{\otimes m})$ lower bound.
	\end{proof}

	\section{Explicit stabilizer decompositions}\label{app:decomps}
	
	This appendix groups the explicit stabilizer decompositions by
	orbit. Each verification states the canonical-form data
	$(k_i, x_{0,i}, W_i, Q_i)$ for the basis stabilizer states $\ket{\sigma_i}$
	via~\eqref{eq:stab-canonical}, gives linear coefficients
	$\alpha_i$, and supplies an algebraic proof that
	$\sum_i \alpha_i \ket{\sigma_i}$ equals the target $\ket{\psi}$. The
	verifications all follow a common pattern: a particular feature of the
	target (its support partition for Strange, its product
	factorization $\Hth = (N/2)(\ket{0}+\ket{+})$ for $H_3$, the
	Norrell-character form $\Norrell = (1/\sqrt 6) \sum_y c_N(y) \ket{y}$
	for Norrell) reduces the $3^m$ component identities to a small set of
	scalar identities in the orbit-specific algebraic constants. The
	underlying decompositions were originally produced by the
	simulated-annealing search of the \texttt{stabrank} library
	(Section~\ref{sec:stabrank}). Every qutrit decomposition theorem
	in this appendix is also formalized in Lean~4 + mathlib4 in the
	accompanying \texttt{LeanProofs/} directory; the specific Lean
	file(s) realizing each proof are named at the end of the
	corresponding proof. The qubit $T$-type
	$\chiR(\ket T^{\otimes 4}) \le 3$ decomposition of
	Appendix~\ref{app:qubit-t} is verified numerically at machine
	precision and is not (yet) Lean-formalized.

	Throughout the appendix $\omega = e^{2\pi i / 3}$ denotes the
	primitive cube root of unity, satisfying
	\begin{equation}\label{eq:omega-identities}
		1 + \omega + \omega^2 = 0,
		\qquad
		\omega = -\frac{1}{2} + \frac{i \sqrt 3}{2},
		\qquad
		\omega^2 = -\frac{1}{2} - \frac{i \sqrt 3}{2},
	\end{equation}
	and $\overline{\cdot}$ denotes entrywise complex conjugation
	in the computational basis. The constants $c, N$ for $H_3$, the
	Norrell character $c_N$, and the twelfth-root $\xi$ used in the
	Norrell $m=4$ proof are introduced at the head of each orbit
	subsection.
	
	\subsection{\texorpdfstring{Strange-orbit decompositions}{Strange-orbit decompositions}}\label{app:strange}
	
	\subsubsection{\texorpdfstring{$\chiR(\Strange^{\otimes 2}) = 2$}{chi(S tensor 2) = 2}}\label{app:strange-m2}
	
	\begin{proof}[Proof of Theorem~\ref{thm:strange-m2}]
		Fix two quadratic forms on $\F{3}^2$,
		\begin{equation}\label{eq:strange-m2-Qs}
			Q_1(y) = y_0^2 + y_0 y_1 + y_1^2, \qquad
			Q_2(y) = y_0^2 + 2 y_0 y_1 + y_1^2 \pmod 3,
		\end{equation}
		and let $\ket{\sigma_j}$ be the two-qutrit stabilizer state given
		by the canonical form~\eqref{eq:stab-canonical} with $k = 2$,
		$x_0 = 0$, $W = I_2$, and quadratic phase polynomial $Q_j$,
		\begin{equation}\label{eq:strange-m2-sigma}
			\ket{\sigma_j} = \frac{1}{3} \sum_{y \in \F{3}^2}
			\omega^{Q_j(y)} \ket{y}.
		\end{equation}
		Subtracting the two states,
		\begin{equation}\label{eq:strange-m2-diff-raw}
			\ket{\sigma_1} - \ket{\sigma_2}
			= \frac{1}{3} \sum_{y \in \F{3}^2} \omega^{y_0^2 + y_1^2}
			\big( \omega^{y_0 y_1} - \omega^{2 y_0 y_1} \big) \ket{y}.
		\end{equation}
		Substituting $\omega = -\frac{1}{2} + \frac{i \sqrt{3}}{2}$ gives
		$\omega^a - \omega^{2a} = 0, +i\sqrt{3}, -i\sqrt{3}$ for
		$a \equiv 0, 1, 2 \pmod 3$, so the inner bracket
		in~\eqref{eq:strange-m2-diff-raw} vanishes whenever $y_0 = 0$ or
		$y_1 = 0$. On the four remaining $y \in \{1, 2\}^2$ one has
		$y_0^2 + y_1^2 \equiv 2 \pmod 3$ uniformly, $y_0 y_1 \equiv 1$ at
		$(1,1), (2,2)$, and $y_0 y_1 \equiv 2$ at $(1,2), (2,1)$, so
		\begin{equation}\label{eq:strange-m2-diff-collected}
			\ket{\sigma_1} - \ket{\sigma_2}
			= \frac{i \sqrt{3}}{3}  \omega^2 			\big( \ket{11} - \ket{12} - \ket{21} + \ket{22} \big)
			= \frac{2 i \sqrt{3}}{3}  \omega^2  \ket{\mathbb{S}}^{\otimes 2},
		\end{equation}
		where the second equality uses
		$\ket{\mathbb{S}}^{\otimes 2}
		= \frac{1}{2}(\ket{1} - \ket{2})^{\otimes 2}
		= \frac{1}{2}(\ket{11} - \ket{12} - \ket{21} + \ket{22})$.
		Inverting~\eqref{eq:strange-m2-diff-collected} (and using
		$\omega^{-2} = \omega$, $i^{-1} = -i$),
		\begin{equation}\label{eq:strange-m2-decomp}
			\ket{\mathbb{S}}^{\otimes 2}
			= -\frac{i \sqrt{3}}{2}  \omega  \big( \ket{\sigma_1} - \ket{\sigma_2} \big),
		\end{equation}
		matching Theorem~\ref{thm:strange-m2}.
		This exhibits $\ket{\mathbb{S}}^{\otimes 2}$ as a $\complex$-linear
		combination of two stabilizer states, so
		$\chiR(\ket{\mathbb{S}}^{\otimes 2}) \le 2$.
		
		For the lower bound, $\ket{\mathbb{S}}^{\otimes 2}$ has computational-basis
		support $\{(1,1), (1,2), (2,1), (2,2)\}$, of cardinality four. Any
		single 2-qutrit stabilizer state has, by~\eqref{eq:stab-canonical},
		support contained in an affine subspace
		$x_0 + W \F{3}^k \subseteq \F{3}^2$, hence of cardinality $3^r$ for
		some $r \in \{0, 1, 2\}$. Since $4 \notin \{1, 3, 9\}$,
		$\ket{\mathbb{S}}^{\otimes 2}$ is not proportional to any stabilizer state, and
		$\chiR(\ket{\mathbb{S}}^{\otimes 2}) \ne 1$. Combining,
		$\chiR(\ket{\mathbb{S}}^{\otimes 2}) = 2$.

		\textit{Machine-checked Lean formalization:}
		\texttt{StrangeM2Pointwise.lean}.
	\end{proof}
	
	\subsubsection{\texorpdfstring{$\chiR(\Strange^{\otimes 3}) \le 4$}{chi(S tensor 3) <= 4}}\label{app:strange-m3}
	
	\begin{proof}[Proof that $\chiR(\Strange^{\otimes 3}) \le 4$]
		Set $W = (e_0, e_2) \in \F{3}^{3 \times 2}$.
		Define four 3-qutrit stabilizer states by~\eqref{eq:stab-canonical}, all
		with $k = 2$ and the matrix $W$ above:
		\begin{equation}
			\begin{aligned}
				\ket{\sigma_1} &: x_0 = (0, 2, 0), 
				Q_1(w) = w_0^2 + w_0 w_1 \pmod 3, \\
				\ket{\sigma_3} &: x_0 = (0, 2, 0), 
				Q_3(w) = w_0^2 + 2 w_0 w_1 \pmod 3, \\
				\ket{\sigma_2} &: x_0 = (0, 1, 0), 
				Q_2(w) = 2 w_0^2 + w_0 w_1 + w_1^2 \pmod 3, \\
				\ket{\sigma_4} &: x_0 = (0, 1, 0), 
				Q_4(w) = 2 w_0^2 + 2 w_0 w_1 + w_1^2 \pmod 3.
			\end{aligned}
		\end{equation}
		Set $\alpha = \frac{\sqrt 6}{4}  e^{-i \pi / 6} = \frac{3 \sqrt 2 - i \sqrt 6}{8}$
		and $\beta = -\frac{i \sqrt 6}{4}$. We claim
		\begin{equation}\label{eq:strange-m3-id}
			\Strange^{\otimes 3}
			= \alpha \big( \ket{\sigma_1} - \ket{\sigma_3} \big)
			+ \beta \big( \ket{\sigma_2} - \ket{\sigma_4} \big).
		\end{equation}
		
		Since $\Strange = (\ket{1} - \ket{2}) / \sqrt 2$, the amplitude
		$[\Strange^{\otimes 3}]_y$ vanishes whenever some $y_i = 0$, and equals
		$(-1)^{n_2} / (2 \sqrt 2)$ at $y \in \{1, 2\}^3$, where
		$n_2 = |\{i : y_i = 2\}|$. The shared $W$ sends $w \mapsto (w_0, 0, w_1)$,
		so $\ket{\sigma_1}, \ket{\sigma_3}$ have support on the plane $\{y : y_1 = 2\}$ and
		$\ket{\sigma_2}, \ket{\sigma_4}$ on $\{y : y_1 = 1\}$; both sides
		of~\eqref{eq:strange-m3-id} vanish on the remaining plane $y_1 = 0$, so
		the identity decouples into two independent verifications.
		
		Consider the plane $y_1 = 2$. On this plane the canonical-form
		parameter $w = (w_0, w_1)$ satisfies $w_0 = y_0$ and $w_1 = y_2$,
		so $[\sigma_j]_y = \omega^{Q_j(y_0, y_2)} / 3$ for $j \in \{1, 3\}$.
		The polynomials $Q_1, Q_3$ differ only in the cross-term
		$\pm w_0 w_1 = \pm y_0 y_2$, which vanishes whenever $y_0 = 0$ or
		$y_2 = 0$; on those lines
		$\ket{\sigma_1}_y = \ket{\sigma_3}_y$, so $\alpha (\ket{\sigma_1} - \ket{\sigma_3})_y = 0$,
		matching $\Strange^{\otimes 3}_y = 0$. At $(y_0, y_2) \in \{1, 2\}^2$
		one has $y_0^2 \equiv 1 \pmod 3$, and direct evaluation yields
		\begin{equation}
			(Q_1, Q_3)(y_0, y_2) =
			\begin{cases}
				(2, 0), & y_0 = y_2, \\
				(0, 2), & y_0 \neq y_2,
			\end{cases}
			\qquad \text{whence} \qquad
			\omega^{Q_1} - \omega^{Q_3} = \epsilon(y_0, y_2)  (\omega^2 - 1),
		\end{equation}
		with $\epsilon(y_0, y_2) = (-1)^{[y_0 = 2] + [y_2 = 2]}$ equal to $+1$
		when $y_0 = y_2$ and $-1$ otherwise. Meanwhile
		$[\Strange^{\otimes 3}]_{(y_0, 2, y_2)} = -\epsilon(y_0, y_2) / (2 \sqrt 2)$
		(the leading minus sign coming from the fixed $y_1 = 2$ factor). The
		four-point identity therefore reduces to the single scalar equation
		$\alpha (\omega^2 - 1) / 3 = -1 / (2 \sqrt 2)$. Using
		$\omega^2 - 1 = -\frac{3}{2} - \frac{i \sqrt 3}{2} = -\sqrt 3  e^{i \pi / 6}$,
		\begin{equation}
			\alpha
			= \frac{-3}{2 \sqrt 2 (-\sqrt 3  e^{i \pi / 6})}
			= \frac{3}{2 \sqrt 6}  e^{-i \pi / 6}
			= \frac{\sqrt 6}{4}  e^{-i \pi / 6},
		\end{equation}
		which matches the stated value of $\alpha$.
		
		Consider the plane $y_1 = 1$. The same argument applies: $Q_2$ and $Q_4$ agree whenever $y_0 = 0$ or
		$y_2 = 0$, so $\beta(\ket{\sigma_2} - \ket{\sigma_4})$ vanishes there. At
		$(y_0, y_2) \in \{1, 2\}^2$, $y_0^2 \equiv y_2^2 \equiv 1$ reduces
		$Q_2 \equiv y_0 y_2$ and $Q_4 \equiv 2 y_0 y_2 \pmod 3$, giving
		\begin{equation}
			(Q_2, Q_4)(y_0, y_2) =
			\begin{cases}
				(1, 2), & y_0 = y_2, \\
				(2, 1), & y_0 \neq y_2,
			\end{cases}
			\quad \text{whence} \quad
			\omega^{Q_2} - \omega^{Q_4} = i \sqrt 3  \epsilon(y_0, y_2).
		\end{equation}
		Now $[\Strange^{\otimes 3}]_{(y_0, 1, y_2)} = \epsilon(y_0, y_2) / (2 \sqrt 2)$,
		so the identity reduces to $\beta  i \sqrt 3 / 3 = 1 / (2 \sqrt 2)$,
		giving
		\begin{equation}
			\beta = \frac{3}{2 \sqrt 2 i \sqrt 3}
			= -\frac{i \sqrt 6}{4},
		\end{equation}
		which matches the stated value of $\beta$.

		\textit{Machine-checked Lean formalization:}
		\texttt{StrangeM3.lean},
		\texttt{StrangeM3Pointwise.lean}.
	\end{proof}
	
	\subsection{\texorpdfstring{$H_3$-orbit decompositions}{H\_3-orbit decompositions}}\label{app:h3}
	
	Throughout this subsection let $c = (\sqrt 3 - 1)/2$ and
	$N = \sqrt{3 - \sqrt 3}$. Direct verification gives
	\begin{equation}\label{eq:h3-factor}
		\Hth = \frac{N}{2}\big( \ket{0} + \ket{+} \big),
		\qquad \ket{+} = \frac{1}{\sqrt 3}\big( \ket{0} + \ket{1} + \ket{2} \big),
	\end{equation}
	both equations $N^2 (1 + 1/\sqrt 3)/2 = 1$ and $N^2/(2\sqrt 3) = c$
	reducing to $N^2 = 3 - \sqrt 3$. We will use the algebraic identities
	\begin{equation}\label{eq:h3-identities}
		c(c + 1) = \frac{1}{2}, \qquad
		c(\sqrt 3 + 1) = 1, \qquad
		\sqrt 3 - 1 = 2c,
	\end{equation}
	together with $1/N^3 = (3 + \sqrt 3)/(6 N)$; all are immediate from
	$c = (\sqrt 3 - 1)/2$ and $N^2 = 3 - \sqrt 3$.
	
	\subsubsection{\texorpdfstring{$\chiR(\Hth^{\otimes 2}) \le 3$}{chi(H3 tensor 2) <= 3}}\label{app:h3-m2}
	
	\begin{proof}[Proof that $\chiR(\Hth^{\otimes 2}) \le 3$]
		Define three 2-qutrit stabilizer states by~\eqref{eq:stab-canonical}:
		\begin{equation}\label{eq:h3-m2-states}
			\begin{aligned}
				\ket{\sigma_1} &: k = 1, x_0 = 0, W = e_1, Q_1 = 0,
				\quad \text{so } \ket{\sigma_1} = \ket{0, +}, \\
				\ket{\sigma_2} &: k = 1, x_0 = 0, W = e_0, Q_2 = 0,
				\quad \text{so } \ket{\sigma_2} = \ket{+, 0}, \\
				\ket{\sigma_3} &: k = 2, x_0 = 0, W = I_2, 
				Q_3(y) = 2 y_0^2 + y_1^2 \pmod 3,
			\end{aligned}
		\end{equation}
		and set
		\begin{equation}\label{eq:h3-m2-coeffs}
			\alpha_1 = \frac{c \sqrt 3}{N^2}\big( 1 - c \omega \big), \qquad
			\alpha_2 = \frac{c \sqrt 3}{N^2}\big( 1 - c \omega^2 \big), \qquad
			\alpha_3 = \frac{3 c^2}{N^2}.
		\end{equation}
		We claim
		\begin{equation}\label{eq:h3-m2-id}
			\Hth^{\otimes 2}
			= \alpha_1 \ket{\sigma_1} + \alpha_2 \ket{\sigma_2} + \alpha_3 \ket{\sigma_3}.
		\end{equation}
		
		By~\eqref{eq:h3-factor}, $[\Hth^{\otimes 2}]_y = c^{|y|_*}/N^2$ with
		$|y|_* = [y_0 \neq 0] + [y_1 \neq 0]$. The components
		$[\sigma_1]_y = (1/\sqrt 3)[y_0 = 0]$,
		$[\sigma_2]_y = (1/\sqrt 3)[y_1 = 0]$, and
		$[\sigma_3]_y = \omega^{Q_3(y)}/3$ all depend on $y$ only through the pair
		$(a, b) = ([y_0 \ne 0], [y_1 \ne 0]) \in \{0, 1\}^2$, since
		$y_i^2 \equiv [y_i \ne 0] \pmod 3$. In particular $Q_3(y) \equiv 2a + b
		\pmod 3$, and~\eqref{eq:h3-m2-id} reduces to four scalar identities,
		one per class.
		
		Writing $\alpha_1/\sqrt 3 = (c/N^2)(1 - c\omega)$ and
		$\alpha_2/\sqrt 3 = (c/N^2)(1 - c\omega^2)$, the right side of
		\eqref{eq:h3-m2-id} at any $y$ in class $(a, b)$ equals
		\begin{equation}\label{eq:h3-m2-class-eval}
			[\alpha_1 \sigma_1]_y + [\alpha_2 \sigma_2]_y + [\alpha_3 \sigma_3]_y
			= \frac{c}{N^2}[a = 0] (1 - c \omega)
			+ \frac{c}{N^2}[b = 0] (1 - c \omega^2)
			+ \frac{c^2}{N^2} \omega^{2a + b}.
		\end{equation}
		Evaluating~\eqref{eq:h3-m2-class-eval} in each class
		(using $\omega + \omega^2 = -1$ and $c(1 + c) = 1/2$):
		\begin{center}
			\renewcommand{\arraystretch}{1.4}
			\begin{tabular}{c|c|c}
				\toprule
				$(a, b)$ & RHS & target $c^{a+b}/N^2$ \\
				\midrule
				$(0, 0)$ & $\frac{c}{N^2}\big[(1 - c\omega) + (1 - c\omega^2)\big] + \frac{c^2}{N^2}
				= \frac{2c(1+c)}{N^2} = \frac{1}{N^2}$ & $\frac{1}{N^2}$ \\
				$(0, 1)$ & $\frac{c}{N^2}(1 - c\omega) + \frac{c^2}{N^2}\omega
				= \frac{c}{N^2}$ & $\frac{c}{N^2}$ \\
				$(1, 0)$ & $\frac{c}{N^2}(1 - c\omega^2) + \frac{c^2}{N^2}\omega^2
				= \frac{c}{N^2}$ & $\frac{c}{N^2}$ \\
				$(1, 1)$ & $\frac{c^2}{N^2}\omega^0 = \frac{c^2}{N^2}$
				& $\frac{c^2}{N^2}$ \\
				\bottomrule
			\end{tabular}
		\end{center}
		This establishes~\eqref{eq:h3-m2-id} and hence
		$\chiR(\Hth^{\otimes 2}) \le 3$.

		\textit{Machine-checked Lean formalization:}
		\texttt{H3M2Pointwise.lean}.
	\end{proof}
	
	\subsubsection{\texorpdfstring{$\chiR(\Hth^{\otimes 3}) \le 4$}{chi(H3 tensor 3) <= 4}}\label{app:h3-m3}
	
	\begin{proof}[Proof that $\chiR(\Hth^{\otimes 3}) \le 4$]
		Define four 3-qutrit stabilizer states by~\eqref{eq:stab-canonical}:
		$\ket{\sigma_j}$ for $j \in \{1, 2\}$ has $k = 3$, $x_0 = 0$, $W = I_3$, and
		$Q_j(y) \pmod 3$ given by
		\begin{equation}\label{eq:h3-Qs}
			Q_1(y) = 2 y_0^2 + 2 y_1^2 + y_2^2, \qquad
			Q_2(y) = y_0^2 + y_1^2 + 2 y_2^2,
		\end{equation}
		and
		\begin{equation}\label{eq:h3-S3S4}
			\ket{\sigma_3} = \ket{0, 0, +}, \qquad \ket{\sigma_4} = \ket{+, +, 0},
		\end{equation}
		recognizing the canonical form with $k = 1$, $W = e_2$, $Q_3 = 0$ and
		$k = 2$, $W = (e_0, e_1)$, $Q_4 = 0$ respectively. Set
		\begin{equation}\label{eq:h3-coeffs}
			\alpha_1 = \frac{3 c}{4 N}(1 + i), \quad
			\alpha_2 = \frac{3 c}{4 N}(1 - i), \quad
			\alpha_3 = \alpha_4 = \frac{3}{4 N}.
		\end{equation}
		We claim
		\begin{equation}\label{eq:h3-id}
			\Hth^{\otimes 3} = \alpha_1 \ket{\sigma_1} + \alpha_2 \ket{\sigma_2}
			+ \alpha_3 \ket{\sigma_3} + \alpha_4 \ket{\sigma_4}.
		\end{equation}
		
		Both $Q_1$ and $Q_2$ depend on $y$ only through $y_i^2 \pmod 3 = [y_i \ne 0]$,
		and~\eqref{eq:h3-S3S4} likewise depends only on which coordinates of
		$y$ vanish, so each component of the right side
		of~\eqref{eq:h3-id} depends only on the pair
		\begin{equation}
			(n, z) = \big( |\{ i \in \{0, 1\} : y_i \ne 0 \}|, [y_2 \ne 0] \big) \in \{0,1,2\} \times \{0,1\}.
		\end{equation}
		Each component of $\Hth^{\otimes 3}$ depends only on
		$|y|_* = n + z$ via~\eqref{eq:h3-factor}, namely
		$[\Hth^{\otimes 3}]_y = c^{n + z} / N^3$. Hence~\eqref{eq:h3-id}
		reduces to six scalar identities, one per class.
		
		For the $\ket{\sigma_1}, \ket{\sigma_2}$ contribution at $y$ in class $(n, z)$,
		the canonical form gives
		$\alpha_1 [\sigma_1]_y + \alpha_2 [\sigma_2]_y = (3 \sqrt 3)^{-1}
		(\alpha_1 \omega^{2n + z} + \alpha_2 \omega^{n + 2z})$. Substituting
		$\omega = -\frac{1}{2} + \frac{i \sqrt 3}{2}$, the three values of
		$(\alpha_1 \omega^A + \alpha_2 \omega^B)$ that arise are
		\begin{equation}\label{eq:h3-three-values}
			\begin{aligned}
				\alpha_1 + \alpha_2 &= \frac{3 c}{2 N}, \\
				\alpha_1 \omega + \alpha_2 \omega^2
				&= -\frac{3 c (1 + \sqrt 3)}{4 N}
				= -\frac{3}{4 N}
				&& \text{(using $c(1 + \sqrt 3) = 1$)}, \\
				\alpha_1 \omega^2 + \alpha_2 \omega
				&= \frac{3 c (\sqrt 3 - 1)}{4 N}
				= \frac{3 c^2}{2 N}
				&& \text{(using $\sqrt 3 - 1 = 2 c$)}.
			\end{aligned}
		\end{equation}
		Combining with $[\sigma_3]_y = (1/\sqrt 3) [n = 0]$ and
		$[\sigma_4]_y = (1/3)[z = 0]$, the right side of~\eqref{eq:h3-id} at any
		$y$ in class $(n, z)$ is the entry below; the left side is
		$c^{n+z}/N^3$:
		
		\begin{center}
			\renewcommand{\arraystretch}{1.4}
			\begin{tabular}{c|c|c|c}
				\toprule
				$(n, z)$ & $\alpha_1[\sigma_1]_y + \alpha_2[\sigma_2]_y$ &
				$\alpha_3[\sigma_3]_y + \alpha_4[\sigma_4]_y$ &
				RHS at $y$ \\
				\midrule
				$(0, 0)$ & $\frac{c}{2 N \sqrt 3}$ &
				$\frac{3}{4 N}\big(\frac{1}{\sqrt 3} + \frac{1}{3}\big)
				= \frac{3 + \sqrt 3}{4 N \sqrt 3}$ &
				$\frac{3 + \sqrt 3}{6 N} = 1 / N^3$ \\
				$(0, 1)$ & $-\frac{1}{4 N \sqrt 3}$ &
				$\frac{3}{4 N \sqrt 3}$ &
				$\frac{1}{2 N \sqrt 3} = c / N^3$ \\
				$(1, 0)$ & $\frac{c^2}{2 N \sqrt 3}$ &
				$\frac{1}{4 N}$ &
				$\frac{1}{2 N \sqrt 3} = c / N^3$ \\
				$(1, 1)$ & $\frac{c}{2 N \sqrt 3}$ &
				$0$ &
				$\frac{c}{2 N \sqrt 3} = c^2 / N^3$ \\
				$(2, 0)$ & $-\frac{1}{4 N \sqrt 3}$ &
				$\frac{1}{4 N}$ &
				$\frac{c}{2 N \sqrt 3} = c^2 / N^3$ \\
				$(2, 1)$ & $\frac{c^2}{2 N \sqrt 3}$ &
				$0$ &
				$\frac{c^2}{2 N \sqrt 3} = c^3 / N^3$ \\
				\bottomrule
			\end{tabular}
		\end{center}
		
		Equality of the right-hand entries with $c^{n+z}/N^3$ in each row is
		the identity
		\begin{equation}\label{eq:h3-collapse}
			\frac{c^k}{N^3} = \frac{c^k (3 + \sqrt 3)}{6 N},
			\qquad k \in \{0, 1, 2, 3\},
		\end{equation}
		which follows from $(3 + \sqrt 3) N^2 = (3 + \sqrt 3)(3 - \sqrt 3) = 6$.
		The class-by-class identities then reduce to elementary arithmetic
		in $c, \sqrt 3$. As an example, the $(n, z) = (0, 0)$ row asks for
		\begin{equation}\label{eq:h3-row00}
			\frac{c}{2 N \sqrt 3} + \frac{3 + \sqrt 3}{4 N \sqrt 3}
			= \frac{2 c + 3 + \sqrt 3}{4 N \sqrt 3}
			= \frac{(\sqrt 3 - 1) + 3 + \sqrt 3}{4 N \sqrt 3}
			= \frac{2 + 2\sqrt 3}{4 N \sqrt 3}
			= \frac{3 + \sqrt 3}{6 N}
			= \frac{1}{N^3},
		\end{equation}
		using $2c = \sqrt 3 - 1$ in the third equality and rationalizing
		$1/\sqrt 3$ in the fourth. The $(1, 0)$ row similarly uses
		$2c^2 = 2 - \sqrt 3$ to collapse $2 c^2 + \sqrt 3 = 2$, giving
		$1/(2 N \sqrt 3) = c/N^3$; the remaining four rows are the
		analogous one-line manipulations. This proves~\eqref{eq:h3-id} and
		hence $\chiR(\Hth^{\otimes 3}) \le 4$.

		\textit{Machine-checked Lean formalization:}
		\texttt{H3M3.lean},
		\texttt{H3M3Pointwise.lean}.
	\end{proof}
	
	\subsection{\texorpdfstring{Norrell-orbit decompositions}{Norrell-orbit decompositions}}\label{app:norrell}
	
	The single-qutrit Norrell state factors as
	\begin{equation}\label{eq:norrell-character-form}
		\Norrell = \frac{1}{\sqrt 6} \sum_{y \in \F{3}} c_N(y)  \ket{y},
		\qquad
		c_N(0) = c_N(1) = 1, \quad c_N(2) = -2;
	\end{equation}
	equivalently $c_N(y) = 1 - 3 [y = 2]$. We refer to $c_N$ as the
	\emph{Norrell character}. The $m$-fold tensor amplitude is
	\begin{equation}\label{eq:norrell-tensor}
		[\Norrell^{\otimes m}]_y
		= \frac{1}{(\sqrt 6)^m} \prod_{i=0}^{m-1} c_N(y_i)
		= \frac{(-2)^{n_2(y)}}{(\sqrt 6)^m},
		\qquad
		n_2(y) = |\{ i : y_i = 2 \}|.
	\end{equation}
	We will use the polynomial identities
	\begin{equation}\label{eq:norrell-y2-reduction}
		y + 2 y^2 \equiv [y = 2] \pmod 3, \qquad
		2 y + y^2 \equiv 2  [y = 2] \pmod 3
	\end{equation}
	on $\F{3}$ (verified by direct enumeration), which reduce general
	quadratic phase polynomials to indicator-only expressions on the
	supports of the basis stabilizer states.
	
	\subsubsection{\texorpdfstring{$\chiR(\Norrell^{\otimes 2}) \le 3$}{chi(N tensor 2) <= 3}}\label{app:norrell-m2}
	
	\begin{proof}[Proof that $\chiR(\Norrell^{\otimes 2}) \le 3$]
		Define three 2-qutrit stabilizer states by~\eqref{eq:stab-canonical}:
		\begin{equation}\label{eq:norrell-m2-states}
			\begin{aligned}
				\ket{\sigma_1} &: k = 2, x_0 = 0, W = I_2, 
				Q_1(y) = y_0 + y_1 + y_0 y_1 \pmod 3, \\
				\ket{\sigma_2} &: k = 0, x_0 = (2, 2),
				\quad \text{so } \ket{\sigma_2} = \ket{2, 2}, \\
				\ket{\sigma_3} &: k = 2, x_0 = 0, W = I_2, 
				Q_3(y) = 2  Q_1(y) \pmod 3,
			\end{aligned}
		\end{equation}
		so $\ket{\sigma_3} = \overline{\ket{\sigma_1}}$ (entrywise complex conjugate in
		the computational basis). Set $\alpha = -\omega/2$. We claim
		\begin{equation}\label{eq:norrell-m2-id}
			\Norrell^{\otimes 2}
			= \alpha  \ket{\sigma_1} + \ket{\sigma_2} + \overline{\alpha}  \ket{\sigma_3},
		\end{equation}
		equivalently
		$\Norrell^{\otimes 2} = -\frac{\omega}{2} \ket{\sigma_1} + \ket{2, 2}
		- \frac{\omega^2}{2} \ket{\sigma_3}$.
		
		The polynomial $Q_1$ factors as
		\begin{equation}\label{eq:norrell-m2-Qfactor}
			Q_1(y) \equiv (1 + y_0)(1 + y_1) - 1 \pmod 3,
		\end{equation}
		so $Q_1(y) \equiv 2 \pmod 3$ iff $(1 + y_0)(1 + y_1) \equiv 0$, iff
		$y_0 = 2$ or $y_1 = 2$. Equivalently,
		\begin{equation}\label{eq:norrell-m2-Qind}
			[Q_1(y) = 2] = [y_0 = 2] + [y_1 = 2] - [y_0 = y_1 = 2].
		\end{equation}
		By~\eqref{eq:stab-canonical}, the contribution of
		$\alpha \ket{\sigma_1} + \overline{\alpha} \ket{\sigma_3}$ at $y$ is
		\begin{equation}\label{eq:norrell-m2-S13}
			[\alpha \sigma_1 + \overline{\alpha} \sigma_3]_y
			= \frac{1}{3}\big(-\frac{\omega}{2}\big) \omega^{Q_1(y)}
			+ \frac{1}{3}\big(-\frac{\omega^2}{2}\big) \omega^{2 Q_1(y)}
			= -\frac{1}{6} \big( \omega^{r} + \omega^{2 r} \big),
			\qquad r = 1 + Q_1(y) \pmod 3.
		\end{equation}
		Since $1 + \omega + \omega^2 = 0$, $\omega^r + \omega^{2 r}$ equals
		$2$ when $r = 0$ and $-1$ when $r \in \{1, 2\}$; noting $r = 0$ iff
		$Q_1(y) = 2$,
		\begin{equation}\label{eq:norrell-m2-S13-cN}
			[\alpha \sigma_1 + \overline{\alpha} \sigma_3]_y = \frac{1}{6}  c_N(Q_1(y)).
		\end{equation}
		Adding the contribution $\delta_{y, (2,2)}$ from $\ket{\sigma_2}$,
		\begin{equation}\label{eq:norrell-m2-sum}
			[\alpha \sigma_1 + \sigma_2 + \overline{\alpha} \sigma_3]_y
			= \frac{1}{6}  c_N(Q_1(y)) + \delta_{y, (2, 2)}.
		\end{equation}
		Comparing with $[\Norrell^{\otimes 2}]_y = (1/6)c_N(y_0) c_N(y_1)$
		from~\eqref{eq:norrell-tensor},~\eqref{eq:norrell-m2-id} reduces to the
		single identity
		\begin{equation}\label{eq:norrell-m2-key}
			c_N(y_0)  c_N(y_1) - c_N(Q_1(y)) = 6  [y = (2, 2)].
		\end{equation}
		Using $c_N(y) = 1 - 3 [y = 2]$ and~\eqref{eq:norrell-m2-Qind},
		\begin{align}
			c_N(y_0) c_N(y_1) &= 1 - 3 [y_0 = 2] - 3 [y_1 = 2]
			+ 9  [y_0 = y_1 = 2], \\
			c_N(Q_1(y))       &= 1 - 3  [Q_1(y) = 2]
			= 1 - 3 [y_0 = 2] - 3 [y_1 = 2]
			+ 3  [y_0 = y_1 = 2],
		\end{align}
		so their difference is $6  [y_0 = y_1 = 2] = 6  [y = (2,2)]$,
		establishing~\eqref{eq:norrell-m2-key} and
		hence~\eqref{eq:norrell-m2-id}. Therefore
		$\chiR(\Norrell^{\otimes 2}) \le 3$.

		\textit{Machine-checked Lean formalization:}
		\texttt{NorrellM2Pointwise.lean}.
	\end{proof}
	
	\subsubsection{\texorpdfstring{$\chiR(\Norrell^{\otimes 3}) \le 4$}{chi(N tensor 3) <= 4}}\label{app:norrell-m3}
	
	\begin{proof}[Proof that $\chiR(\Norrell^{\otimes 3}) \le 4$]
		Define four 3-qutrit stabilizer states by~\eqref{eq:stab-canonical}:
		\begin{equation}
			\begin{aligned}
				\ket{\sigma_1} &: k = 2, x_0 = (2,0,0), 
				W = (e_1, e_2), 
				Q_1(w) = w_0 + 2 w_0^2 + 2 w_1 + w_1^2 \pmod 3, \\
				\ket{\sigma_2} &: k = 2, x_0 = (0,2,0), 
				W = (e_0, e_2), 
				Q_2(w) = 2 w_0 + w_0^2 + w_1 + 2 w_1^2 \pmod 3, \\
				\ket{\sigma_3} &: k = 3, x_0 = 0, 
				W = I_3, 
				Q_3 = 0, \quad \text{so } \ket{\sigma_3} = \ket{+}^{\otimes 3}, \\
				\ket{\sigma_4} &: k = 2, x_0 = (0,0,2), 
				W = (e_0, e_1), 
				Q_4(w) = w_0 + 2 w_0^2 + 2 w_1 + w_1^2 \pmod 3.
			\end{aligned}
		\end{equation}
		We claim
		\begin{equation}\label{eq:norrell-id}
			\Norrell^{\otimes 3}
			= -\frac{\sqrt 6}{4} \big( \ket{\sigma_1} + \ket{\sigma_2} + \ket{\sigma_4} \big)
			+ \frac{\sqrt 2}{4}  \ket{\sigma_3}.
		\end{equation}
		
		By the reduction
		identities~\eqref{eq:norrell-y2-reduction}, on the supports
		\begin{equation}\label{eq:norrell-m3-supp}
			\mathrm{supp}(\sigma_1) = \{(2, y_1, y_2)\}, \quad
			\mathrm{supp}(\sigma_2) = \{(y_0, 2, y_2)\}, \quad
			\mathrm{supp}(\sigma_4) = \{(y_0, y_1, 2)\},
		\end{equation}
		the phase polynomials reduce to
		\begin{equation}
			Q_1(y) = [y_1{=}2] + 2 [y_2{=}2], \quad
			Q_2(y) = 2 [y_0{=}2] + [y_2{=}2], \quad
			Q_4(y) = [y_0{=}2] + 2 [y_1{=}2] \pmod 3.
		\end{equation}
		Setting $a_i = [y_i = 2] \in \{0, 1\}$ for $i \in \{0, 1, 2\}$
		(consistent with the Norrell m=4 proof below), the right side
		of~\eqref{eq:norrell-id} at any $y$ depends only on
		$a = (a_0, a_1, a_2)$ and equals
		\begin{equation}\label{eq:norrell-rhs}
			\text{RHS at } y
			= \frac{\sqrt 6}{36} - \frac{\sqrt 6}{12} B,
			\qquad
			B = a_0 \omega^{a_1 + 2 a_2} + a_1 \omega^{2 a_0 + a_2}
			+ a_2 \omega^{a_0 + 2 a_1},
		\end{equation}
		the leading $\sqrt 6 / 36$ coming from
		$\alpha_3 [\sigma_3]_y = \frac{\sqrt 2}{4 \sqrt{27}}$, and $B$ collecting
		the contributions from $\sigma_1, \sigma_2, \sigma_4$ each scaled by
		$\alpha_j / 3 = -\sqrt 6 / 12$.
		
		The sum $B$ depends only on $n_2 = a_0 + a_1 + a_2 = |\{i : y_i = 2\}|$:
		it vanishes when $a = (0,0,0)$; for each of the three
		$n_2 = 1$ subsets exactly one term contributes $\omega^0 = 1$; for
		each of the three $n_2 = 2$ subsets two terms contribute
		$\omega^A + \omega^B$ with $\{A, B\} = \{1, 2\}$, summing to
		$\omega + \omega^2 = -1$; and for $a = (1,1,1)$ all three terms
		equal $\omega^3 = 1$, summing to $3$. Therefore
		\begin{equation}
			B(n_2) = (0, 1, -1, 3) \quad \text{for } n_2 = 0, 1, 2, 3,
		\end{equation}
		and substituting into~\eqref{eq:norrell-rhs},
		\begin{equation}\label{eq:norrell-rhs-collapsed}
			\text{RHS}_{n_2} = \frac{\sqrt 6}{36} \big( 1 - 3 B(n_2) \big)
			= \frac{\sqrt 6}{36}  (-2)^{n_2},
		\end{equation}
		which matches $[\Norrell^{\otimes 3}]_y$ from~\eqref{eq:norrell-tensor}
		term-for-term. This establishes~\eqref{eq:norrell-id}.

		\textit{Machine-checked Lean formalization:}
		\texttt{NorrellM3.lean},
		\texttt{NorrellM3Pointwise.lean}.
	\end{proof}
	
	\subsubsection{\texorpdfstring{$\chiR(\Norrell^{\otimes 4}) \le 7$}{chi(N tensor 4) <= 7}}\label{app:norrell-m4}
	
	\begin{proof}[Proof that $\chiR(\Norrell^{\otimes 4}) \le 7$]
		Let $\xi = e^{i \pi / 6}$, a primitive 12th root of unity, so that
		$\xi^4 = \omega$, $\xi^8 = \omega^2$, $\xi^3 = i$, $\xi^6 = -1$, and
		$\xi^{12} = 1$. Define seven 4-qutrit stabilizer
		states by~\eqref{eq:stab-canonical}; the canonical-form data
		$(k_j, x_{0,j}, W_j, Q_j)$ for $j = 0, 1, \ldots, 6$ are
		\begin{equation}\label{eq:norrell-m4-states}
			\begin{aligned}
				\ket{\sigma_0} &: k = 3, x_0 = (0, 2, 0, 0), W = (e_0, e_2, e_3), 
				Q_0(w) = 2 w_0 + w_0^2 + w_1 + 2 w_1^2 + 2 w_2 + w_2^2, \\
				\ket{\sigma_1} &: k = 4, x_0 = 0, W = I_4,
				Q_1(y) = \sum_{i=0}^{1} (2 y_i + y_i^2)
				+ \sum_{i=2}^{3} (y_i + 2 y_i^2), \\
				\ket{\sigma_2} &: k = 2, x_0 = (2, 0, 0, 2), W = (e_1, e_2), 
				Q_2(w) = w_0 + 2 w_0^2 + w_1 + 2 w_1^2, \\
				\ket{\sigma_3} &: k = 2, x_0 = (0, 0, 2, 2), W = (e_0, e_1), 
				Q_3(w) = 2 w_0 + w_0^2 + w_1 + 2 w_1^2, \\
				\ket{\sigma_4} &: k = 3, x_0 = (2, 0, 0, 0), W = (e_1, e_2, e_3), 
				Q_4(w) = 2 w_0 + w_0^2 + 2 w_1 + w_1^2 + w_2 + 2 w_2^2, \\
				\ket{\sigma_5} &: k = 2, x_0 = (0, 2, 2, 0), W = (e_0, e_3), 
				Q_5(w) = w_0 + 2 w_0^2 + 2 w_1 + w_1^2, \\
				\ket{\sigma_6} &: k = 4, x_0 = 0, W = I_4, Q_6 = 0,
				\quad \text{so } \ket{\sigma_6} = \ket{+}^{\otimes 4}.
			\end{aligned}
		\end{equation}
		Setting $\tau = \sqrt 3 / 4$, we claim
		\begin{equation}\label{eq:norrell-m4-id}
			\begin{aligned}
				\Norrell^{\otimes 4}
				&= \tau\omega  \ket{\sigma_0} + \tau\xi  \ket{\sigma_1}
				+ \tau\xi  \ket{\sigma_2} + \tau\xi  \ket{\sigma_3}
				+ \tau\omega  \ket{\sigma_4} + \tau\xi  \ket{\sigma_5}
				+ \frac{1}{4}\omega^2  \ket{\sigma_6}.
			\end{aligned}
		\end{equation}
		The seven coefficients take only three distinct values
		($\tau\omega$, $\tau\xi$, $\omega^2/4$), each a $12$th-root-of-unity
		multiple of a real number.
		
		By the reduction identities~\eqref{eq:norrell-y2-reduction}, on each
		basis state's support the polynomial $Q_j$ collapses to an
		indicator-only expression. Setting $a_i = [y_i = 2] \in \{0, 1\}$,
		\begin{equation}\label{eq:norrell-m4-Qreductions}
			\begin{aligned}
				\mathrm{supp}(\sigma_0) = \{y_1 = 2\}\colon\quad
				& Q_0(y) \equiv 2 a_0 + a_2 + 2 a_3 \pmod 3, \\
				\mathrm{supp}(\sigma_1) = \F{3}^4\colon\quad
				& Q_1(y) \equiv 2 a_0 + 2 a_1 + a_2 + a_3 \pmod 3, \\
				\mathrm{supp}(\sigma_2) = \{y_0 = y_3 = 2\}\colon\quad
				& Q_2(y) \equiv a_1 + a_2 \pmod 3, \\
				\mathrm{supp}(\sigma_3) = \{y_2 = y_3 = 2\}\colon\quad
				& Q_3(y) \equiv 2 a_0 + a_1 \pmod 3, \\
				\mathrm{supp}(\sigma_4) = \{y_0 = 2\}\colon\quad
				& Q_4(y) \equiv 2 a_1 + 2 a_2 + a_3 \pmod 3, \\
				\mathrm{supp}(\sigma_5) = \{y_1 = y_2 = 2\}\colon\quad
				& Q_5(y) \equiv a_0 + 2 a_3 \pmod 3.
			\end{aligned}
		\end{equation}
		Every component of the right side of~\eqref{eq:norrell-m4-id} therefore
		depends on $y$ only through $a = (a_0, a_1, a_2, a_3) \in \{0,1\}^4$.
		Writing $\mathbf{1}_j(a) \in \{0, 1\}$ for the indicator that $a$ lies in
		$\mathrm{supp}(\sigma_j)$, the right-hand side amplitude is
		\begin{equation}\label{eq:norrell-m4-rhs}
			\mathrm{RHS}(a) = \sum_{j=0}^{6} \alpha_j 3^{-k_j/2}
			\omega^{Q_j(a)} \mathbf{1}_j(a),
		\end{equation}
		with the $\alpha_j$ as in~\eqref{eq:norrell-m4-id} and $k_j$ as
		in~\eqref{eq:norrell-m4-states}.
		
		\paragraph{Endpoints.}
		At $a = (0, 0, 0, 0)$ only $\ket{\sigma_1}$ and $\ket{\sigma_6}$ have $a$ in
		their support, so
		\begin{equation}\label{eq:norrell-m4-rhs0}
			\mathrm{RHS}(0,0,0,0)
			= \frac{\tau\xi}{9} \omega^0 + \frac{\omega^2}{36} \omega^0
			= \frac{1}{36} \big( \sqrt 3\xi + \omega^2 \big).
		\end{equation}
		Since $\sqrt 3\xi = \frac{3}{2} + i \frac{\sqrt 3}{2}$ and
		$\omega^2 = -\frac{1}{2} - i \frac{\sqrt 3}{2}$, the sum equals
		$1$, so $\mathrm{RHS}(0,0,0,0) = 1/36$.
		
		At $a = (1, 1, 1, 1)$ all seven basis states contribute. After
		evaluating $Q_j$ in each case (using $\xi^4 = \omega$ and
		$\tau\omega \omega^2 = \tau$), the seven contributions
		collect to
		\begin{equation}\label{eq:norrell-m4-rhs1}
			\begin{aligned}
				\mathrm{RHS}(1,1,1,1)
				&= \frac{1}{12} + \frac{\sqrt 3 \xi}{36}
				+ \frac{(-i)\tau}{3} + \frac{\sqrt 3 \xi}{12}
				+ \frac{1}{12} + \frac{\sqrt 3 \xi}{12}
				+ \frac{\omega^2}{36} \\
				&= \frac{1}{36} \big( 6 + 7\sqrt 3 \xi - 3 i \sqrt 3 + \omega^2 \big).
			\end{aligned}
		\end{equation}
		Substituting $7\sqrt 3 \xi = \frac{21}{2} + i \frac{7\sqrt 3}{2}$
		and collecting reals/imaginaries yields
		$6 + \frac{21}{2} - \frac{1}{2} - 0 = 16$ and
		$\frac{7\sqrt 3}{2} - 3 \sqrt 3 - \frac{\sqrt 3}{2} = 0$, so
		$\mathrm{RHS}(1,1,1,1) = 16/36 = 4/9$.
		
		\paragraph{Intermediate cases.}
		Direct enumeration over the remaining $14$ patterns
		$a \in \{0, 1\}^4$ with $n_2(a) = \sum_i a_i \in \{1, 2, 3\}$
		analogously reduces $\mathrm{RHS}(a)$ to a polynomial identity in
		$\xi$ over $\integer[\sqrt 3]$, decided by
		$\xi^4 = \omega$, $\xi^6 = -1$, and $1 + \omega + \omega^2 = 0$.
		Across the 14 patterns, $\mathrm{RHS}(a)$ depends only on
		$n_2(a) \in \{1, 2, 3\}$: the coefficient pattern in
		Eq.~\eqref{eq:norrell-m4-id} and the reduced phases
		in~\eqref{eq:norrell-m4-Qreductions} are jointly invariant
		under the symmetric-group action that permutes the four
		coordinates of $a$ within each $n_2$-class, so the 4, 6, and 4
		patterns in classes $n_2 = 1, 2, 3$ each yield the same value.
		The two representatives below cover $n_2 = 1$ and $n_2 = 2$;
		the same mechanism handles $n_2 = 3$ and the remaining
		patterns in each class.
		
		\emph{Case $n_2 = 1$, $a = (1, 0, 0, 0)$.} Only $\ket{\sigma_1}, \ket{\sigma_4},
		\ket{\sigma_6}$ have $a$ in their support; reading off the reduced
		phases~\eqref{eq:norrell-m4-Qreductions},
		$Q_1(a) = 2$, $Q_4(a) = 0$, $Q_6 = 0$, so the contributions are
		$\tau\xi \omega^2 / 9 = -i \sqrt 3 / 36$,
		$\tau\omega 1 / (3 \sqrt 3) = \omega / 12$, and
		$\omega^2 / 4 1 / 9 = \omega^2 / 36$, respectively. Summing,
		\begin{equation}\label{eq:norrell-m4-rhs-n21}
			\mathrm{RHS}(1,0,0,0)
			= \frac{1}{36} \big(- i \sqrt 3 + 3\omega + \omega^2\big)
			= -\frac{1}{18},
		\end{equation}
		using $-i \sqrt 3 + 3\omega + \omega^2 = -i \sqrt 3 +
		3(-\frac{1}{2} + i \frac{\sqrt 3}{2}) +
		(-\frac{1}{2} - i \frac{\sqrt 3}{2}) = -2$.
		
		\emph{Case $n_2 = 2$, $a = (1, 1, 0, 0)$.} The supports containing
		this pattern are $\ket{\sigma_0}, \ket{\sigma_1}, \ket{\sigma_4}, \ket{\sigma_6}$;
		$Q_0(a) = 2$, $Q_1(a) = 4 \equiv 1$, $Q_4(a) = 2$, $Q_6 = 0$, so the
		four contributions are
		$\tau\omega \omega^2 / (3\sqrt 3) = 1/12$,
		$\tau\xi \omega/9 = \xi\omega \sqrt 3 / 36$,
		$\tau\omega \omega^2 / (3\sqrt 3) = 1/12$, and
		$\omega^2 / 4 1/9 = \omega^2 / 36$. Summing,
		\begin{equation}\label{eq:norrell-m4-rhs-n22}
			\mathrm{RHS}(1,1,0,0)
			= \frac{1}{36} \big(6 + \xi \omega \sqrt 3 + \omega^2\big)
			= \frac{1}{9},
		\end{equation}
		using $\xi \omega = e^{i 5\pi/6}$, hence $\xi \omega \sqrt 3 =
		-\frac{3}{2} + i \frac{\sqrt 3}{2}$, which combines with $\omega^2$
		to give $-2$.
		
		\paragraph{All 16 cases.}
		The remaining 12 patterns are verified by the same mechanism. The
		$\mathrm{RHS}(a)$ value depends only on $n_2(a)$, and matches the
		target $(-2)^{n_2(a)}/36$ in each $n_2$-class:
		\begin{center}
			\renewcommand{\arraystretch}{1.2}
			\begin{tabular}{c|c|c|c}
				\toprule
				$n_2(a)$ & \# patterns & $\mathrm{RHS}(a)$ & target $(-2)^{n_2}/36$ \\
				\midrule
				$0$ & $1$ & $1/36$    & $1/36$ \\
				$1$ & $4$ & $-1/18$   & $-1/18$ \\
				$2$ & $6$ & $1/9$     & $1/9$ \\
				$3$ & $4$ & $-2/9$    & $-2/9$ \\
				$4$ & $1$ & $4/9$     & $4/9$ \\
				\bottomrule
			\end{tabular}
		\end{center}
		All $16$ scalar identities are checked at exact rational precision
		and the pointwise identity is independently formalized in Lean~4
		(Section~\ref{sec:stabrank}). Comparing with the target amplitude
		$[\Norrell^{\otimes 4}]_y = (-2)^{n_2(y)}/36$
		from~\eqref{eq:norrell-tensor}
		establishes~\eqref{eq:norrell-m4-id}, and hence
		$\chiR(\Norrell^{\otimes 4}) \le 7$.

		\textit{Machine-checked Lean formalization:}
		\texttt{NorrellM4Pointwise.lean}.
	\end{proof}
	
	\subsection{\texorpdfstring{Qubit $T$-type $m = 4$ decomposition}{Qubit T-type m = 4 decomposition}}\label{app:qubit-t}

	The qubit $T$-type orbit of~\cite{bravyi2005universal} is
	represented by
	\begin{equation}\label{eq:qubit-t-rep}
		\ket{T} = \cos\beta \ket{0} + e^{i \pi / 4} \sin\beta \ket{1},
		\qquad \cos(2\beta) = \frac{1}{\sqrt{3}},
	\end{equation}
	from which $\cos^2\beta = (1 + 1/\sqrt 3)/2$,
	$\sin^2\beta = (1 - 1/\sqrt 3)/2$, and $\cos\beta \sin\beta = 1/\sqrt 6$.

	\begin{prop}\label{prop:qubit-t-m4}
		For $x \in \F{2}^4$ define the three $4$-qubit stabilizer states
		\begin{equation}\label{eq:qubit-t-m4-states}
			\begin{aligned}
				\ket{\sigma_1} &= 2^{-3/2}\!\!\sum_{x_1 = x_3}\! i^{x_1} (-1)^{x_2 x_4} \ket{x}, \\
				\ket{\sigma_2} &= 2^{-2}\!\!\sum_{x \in \F{2}^4}\! i^{x_2 + x_4} (-1)^{x_1 x_3 + x_2 x_4} \ket{x}, \\
				\ket{\sigma_3} &= 2^{-3/2}\!\!\sum_{x_2 = x_4}\! i^{x_1 + x_2 + x_3} (-1)^{x_1 x_3} \ket{x},
			\end{aligned}
		\end{equation}
		with linear coefficients
		\begin{equation}\label{eq:qubit-t-m4-coeffs}
			c_1 = \frac{2}{3} e^{i \pi / 12},
			\qquad
			c_2 = \frac{2}{3},
			\qquad
			c_3 = \frac{2}{3} e^{-i \pi / 12}.
		\end{equation}
		Then
		\begin{equation}\label{eq:qubit-t-m4-id}
			\ket{T}^{\otimes 4}
			= c_1 \ket{\sigma_1} + c_2 \ket{\sigma_2} + c_3 \ket{\sigma_3},
		\end{equation}
		hence $\chiR(\ket{T}^{\otimes 4}) = 3$ (tight via monotonicity from the $m=3$ lower-bound certificate) and
		$\gamma_T \le \log_2(3)/4 \approx 0.396$.
	\end{prop}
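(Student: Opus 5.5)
The plan is to verify \eqref{eq:qubit-t-m4-id} coordinate by coordinate, mirroring the class-reduction strategy used for the qutrit identities in Appendix~\ref{app:decomps}. Since $\cos^2\beta\sin^2\beta = 1/6$, write the target amplitude as $[\ket T^{\otimes 4}]_x = \cos^{4-|x|}\beta\,\sin^{|x|}\beta\, e^{i\pi|x|/4}$. The right-hand side depends on $x$ only through the pair structure $(x_1,x_3)$ and $(x_2,x_4)$. I would therefore reorganize both sides as functions of $u=(x_1,x_3)$ and $v=(x_2,x_4)$. Each $\sigma_j$ factorizes over these two pairs:
\begin{itemize}
\item $\sigma_1 \propto (\text{diagonal } u\text{-support with } i^{x_1}) \otimes (\text{CZ-type state on } v)$;
\item $\sigma_3$ is the mirror image, with the roles of $u$ and $v$ swapped;
\item $\sigma_2$ is a product of two identical two-qubit states $\sum i^{a+b}(-1)^{ab}$-type pieces (with the $i$-phase on one member of each pair).
\end{itemize}
The target $\ket T^{\otimes 4} = \ket T^{\otimes 2}_u \otimes \ket T^{\otimes 2}_v$ is also a product over the pairs.

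The key step is to first decompose the two-qubit state $\ket T^{\otimes 2}$ in the relevant two-qubit stabilizer states. Let $\ket{d}=(\ket{00}+i\ket{11})/\sqrt2$ and $\ket{g}=\tfrac12\sum_{a,b} i^{b}(-1)^{ab}\ket{ab}$, adapting which slot carries the $i$ to match \eqref{eq:qubit-t-m4-states}. The identity then reads
\[
\ket T^{\otimes 2}\otimes\ket T^{\otimes 2} = c_1\,\ket d\otimes \ket{g'} + c_2\,\ket g\otimes\ket g + c_3\,\ket{g''}\otimes\ket d,
\]
with appropriate phase variants $g', g''$. I would compute the $4\times4$ amplitude table of $\ket T^{\otimes2}$, namely $(\cos^2\beta,\ e^{i\pi/4}/\sqrt6,\ e^{i\pi/4}/\sqrt6,\ i\sin^2\beta)$. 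I would then note that $\ket T^{\otimes 2}$ lies in the span of $\ket d$ and a single $\ket g$-type state plus a correction. The three-term tensor identity would then follow from expanding a rank-two-like bilinear form. Concretely, I would try to find scalars with $\ket T^{\otimes 2} = p\ket d + q\ket g$ up to one residual direction. The residual cancels in the product because $c_1, c_3$ are conjugate and $c_2$ is real. Failing a clean factorization, I would fall back to the direct check below.

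As the fallback, I would check all $16$ coordinates directly, grouping them by the symmetry $x_1\leftrightarrow x_3$, $x_2\leftrightarrow x_4$, which preserves every term. This leaves $9$ classes indexed by $(|u|,|v|)$, with $|u|,|v|\in\{0,1,2\}$, further split by whether the weight-one pairs sit on the diagonal. In each class the check becomes a scalar identity in $e^{i\pi/12}$, $\sqrt2$, $\sqrt3$, using $\cos^2\beta=(3+\sqrt3)/6$, $\sin^2\beta=(3-\sqrt3)/6$, and $e^{\pm i\pi/12}=(\sqrt6\pm\sqrt2)/4 \pm i(\sqrt6\mp\sqrt2)/4$. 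In every class only one or two of the three $\sigma_j$ are nonzero: for example, $\sigma_1$ vanishes unless $x_1=x_3$. This makes each check short.

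The main obstacle is these mixed classes, where $\sigma_1$ and $\sigma_3$ both contribute with conjugate twelfth-root phases. There, real and imaginary parts must be matched against $\cos^{a}\beta\sin^{b}\beta\, e^{i\pi b/4}$, and $\cos\beta$, $\sin\beta$ individually are not in $\mathbb{Q}(\sqrt2,\sqrt3)$. I would handle this by noting that at odd total weight $|x|$ the target carries a factor $\cos\beta\sin\beta=1/\sqrt6$ (or $\cos^3\beta\sin\beta$, and so on), and that at even weight only $\cos^2\beta$ and $\sin^2\beta$ appear. Every needed quantity is thus expressible in $\mathbb{Q}(\sqrt2,\sqrt3,i)$, and the identities can be cleared. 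A genuine obstruction arises only at coordinates like $x=0000$, where the target is $\cos^4\beta$, which is fine.

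Finally, the rank and exponent statements follow from the bounds already in hand:
\begin{itemize}
\item $\chiR \le 3$ comes from the identity;
\item $\chiR \ge 3$ follows from monotonicity $\chiR(\ket T^{\otimes 4})\ge\chiR(\ket T^{\otimes 3})=3$, the latter from the certificate in Table~\ref{tab:qubit-ub};
\item the exponent bound follows from \eqref{eq:exponent-from-finite} with $p=2$, $m=4$, $r=3$.
\end{itemize}
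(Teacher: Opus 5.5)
Your proposal is sound, and its fallback is exactly the paper's proof. The nine classes indexed by $(|u|,|v|)$ coincide with the paper's cases $(a,b)=([x_1=x_3],[x_2=x_4])$ and their subcases in $p,q$, and you obtain tightness and the exponent bound the same way. There are two small slips in that part. In the four classes with $|u|,|v|\in\{0,2\}$ all three $\sigma_j$ are nonzero, so ``only one or two'' is false. Also, no split beyond $(|u|,|v|)$ is needed.

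Your primary factorization route is genuinely different from the paper's, but as written it misidentifies both the factors and the mechanism. Take $\ket{d}=(\ket{00}+i\ket{11})/\sqrt2$, $\ket{z}=\tfrac12\sum_{a,b}(-1)^{ab}\ket{ab}$ and $\ket{g}=\tfrac12\sum_{a,b}i^{a+b}(-1)^{ab}\ket{ab}$ (phase $i^{a+b}$, not $i^{b}$). Then $\sigma_1=\ket{d}_u\otimes\ket{z}_v$, $\sigma_2=\ket{z}_u\otimes\ket{g}_v$ and $\sigma_3=\ket{g}_u\otimes\ket{d}_v$, so $\sigma_2$ is not a product of identical pieces. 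There is also no residual direction: $\ket{T}^{\otimes 2}=p\ket{d}+q\ket{z}$ exactly, with $p=\sqrt{2/3}\,e^{-i\pi/6}$ and $q=\sqrt{2/3}\,e^{i\pi/4}$.

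Three product terms suffice not because $c_1,c_3$ are conjugate, but because of the linear dependence $\ket{g}=e^{-i\pi/4}\ket{d}+i\ket{z}$. Substitute it and compare coefficients of $\ket{d}\otimes\ket{d}$, $\ket{d}\otimes\ket{z}$, $\ket{z}\otimes\ket{d}$ and $\ket{z}\otimes\ket{z}$. The identity then reduces to $p^2=e^{-i\pi/4}c_3$, $pq=c_1$, $pq=e^{-i\pi/4}c_2+ic_3$ and $q^2=ic_2$, all of which hold.

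Repaired this way, your route trades the paper's nine-class pointwise check for four scalar identities. It also explains where the decomposition comes from: $\chiR(\ket{T}^{\otimes 2})\le 2$ inside a plane that contains a third stabilizer state. The paper's computation only verifies the identity.
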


	\begin{proof}
		The amplitude of $\ket{T}^{\otimes 4}$ at $\ket x$ depends only
		on the Hamming weight $w = x_1 + x_2 + x_3 + x_4$ and equals
		$\cos^{4-w}\!\beta \sin^w\!\beta e^{i \pi w / 4}$. Reducing
		via $\cos^2\beta = (1+1/\sqrt 3)/2$,
		$\sin^2\beta = (1-1/\sqrt 3)/2$, $\cos\beta \sin\beta = 1/\sqrt 6$,
		\begin{equation}\label{eq:qubit-t-m4-targets}
			\bigl[\ket{T}^{\otimes 4}\bigr]_x
			=
			\begin{cases}
				(2 + \sqrt 3)/6 & w = 0,\\
				(\sqrt 3 + 1)(1 + i)/12 & w = 1,\\
				i/6 & w = 2,\\
				(\sqrt 3 - 1)(-1 + i)/12 & w = 3,\\
				(\sqrt 3 - 2)/6 & w = 4.
			\end{cases}
		\end{equation}

		Set $a = [x_1 = x_3]$ and $b = [x_2 = x_4]$. From the support
		conditions in~\eqref{eq:qubit-t-m4-states}, $\sigma_1$ vanishes
		off $\{a = 1\}$ and $\sigma_3$ off $\{b = 1\}$, while $\sigma_2$
		has full support. The right-hand side
		of~\eqref{eq:qubit-t-m4-id} therefore splits into four
		support classes by $(a, b) \in \{0, 1\}^2$. Throughout we use
		\begin{equation}\label{eq:qubit-t-m4-trig}
			\cos\frac{\pi}{12} = \frac{\sqrt 6 + \sqrt 2}{4},
			\quad
			\sin\frac{\pi}{12} = \frac{\sqrt 6 - \sqrt 2}{4},
			\quad
			\cos\frac{\pi}{12} - \sin\frac{\pi}{12} = \frac{1}{\sqrt 2},
		\end{equation}
		which together give the useful identity
		\begin{equation}\label{eq:qubit-t-m4-prefactor}
			\frac{e^{i \pi / 12}}{3 \sqrt 2}
			= \frac{\sqrt 3 + 1}{12} + i \frac{\sqrt 3 - 1}{12}.
		\end{equation}

		\paragraph{Case $(a, b) = (0, 0)$.}
		Every $x$ in this class has $w = 2$, $x_1 x_3 = 0$,
		$x_2 x_4 = 0$, and $x_2 + x_4 = 1$, so the only contributing
		term is
		\begin{equation}
			[c_2 \sigma_2]_x
			= \frac{2}{3} \frac{1}{4} i (-1)^0
			= \frac{i}{6},
		\end{equation}
		matching the $w = 2$ target.

		\paragraph{Case $(a, b) = (1, 1)$.}
		Write $x_1 = x_3 = p$ and $x_2 = x_4 = q$ with $p, q \in \{0, 1\}$;
		$w = 2(p + q)$. The four sub-classes are listed below; in each
		row the right-most column collapses via~\eqref{eq:qubit-t-m4-trig}.
		\begin{center}
			\small
			\renewcommand{\arraystretch}{1.2}
			\begin{tabular}{c|ccc|c}
				\toprule
				$(p, q)$ & $[\sigma_1]_x$ & $[\sigma_2]_x$ & $[\sigma_3]_x$ & $\sum_j c_j [\sigma_j]_x$ \\
				\midrule
				$(0, 0)$ & $\frac{1}{2\sqrt 2}$  & $\frac{1}{4}$  & $\frac{1}{2\sqrt 2}$  & $\frac{2 \cos(\pi/12)}{3\sqrt 2} + \frac{1}{6} = \frac{2 + \sqrt 3}{6}$ \\
				$(0, 1)$ & $-\frac{1}{2\sqrt 2}$ & $\frac{1}{4}$  & $\frac{i}{2\sqrt 2}$  & $\frac{(\cos - \sin)(\pi/12)}{3\sqrt 2}(i - 1) + \frac{1}{6} = \frac{i}{6}$ \\
				$(1, 0)$ & $\frac{i}{2\sqrt 2}$  & $-\frac{1}{4}$ & $\frac{1}{2\sqrt 2}$  & $\frac{(\cos - \sin)(\pi/12)}{3\sqrt 2}(1 + i) - \frac{1}{6} = \frac{i}{6}$ \\
				$(1, 1)$ & $-\frac{i}{2\sqrt 2}$ & $-\frac{1}{4}$ & $\frac{i}{2\sqrt 2}$  & $\frac{2 \sin(\pi/12)}{3\sqrt 2} - \frac{1}{6} = \frac{\sqrt 3 - 2}{6}$ \\
				\bottomrule
			\end{tabular}
		\end{center}
		Each row matches the $w \in \{0, 2, 2, 4\}$ target.

		\paragraph{Case $(a, b) = (1, 0)$.}
		Only $\sigma_1$ and $\sigma_2$ contribute. Writing
		$x_1 = x_3 = p$ and $x_2 \ne x_4$ gives
		$[\sigma_1]_x = i^p / (2\sqrt 2)$ and
		$[\sigma_2]_x = i (-1)^p / 4$, so
		\begin{align}
			p = 0 (w = 1):\quad &
			[c_1 \sigma_1 + c_2 \sigma_2]_x
			= \frac{e^{i \pi / 12}}{3\sqrt 2} + \frac{i}{6}
			= \frac{(\sqrt 3 + 1)(1 + i)}{12},\\
			p = 1 (w = 3):\quad &
			[c_1 \sigma_1 + c_2 \sigma_2]_x
			= \frac{i e^{i \pi / 12}}{3\sqrt 2} - \frac{i}{6}
			= \frac{(\sqrt 3 - 1)(-1 + i)}{12},
		\end{align}
		using~\eqref{eq:qubit-t-m4-prefactor} in each line; these match
		the $w = 1$ and $w = 3$ targets.

		\paragraph{Case $(a, b) = (0, 1)$.}
		Only $\sigma_2$ and $\sigma_3$ contribute. Writing
		$x_2 = x_4 = q$ and $x_1 \ne x_3$ gives
		$i^{x_2 + x_4} = i^{2q} = (-1)^q$,
		$(-1)^{x_1 x_3 + x_2 x_4} = (-1)^{0 + q} = (-1)^q$ (since
		$x_1 \ne x_3 \in \{0, 1\}$ forces $x_1 x_3 = 0$), so
		$[\sigma_2]_x = (-1)^q (-1)^q / 4 = 1/4$ and
		$[\sigma_3]_x = i^{1 + q} / (2\sqrt 2)$, so
		\begin{align}
			q = 0 (w = 1):\quad &
			[c_2 \sigma_2 + c_3 \sigma_3]_x
			= \frac{1}{6} + \frac{i e^{-i \pi / 12}}{3\sqrt 2}
			= \frac{(\sqrt 3 + 1)(1 + i)}{12},\\
			q = 1 (w = 3):\quad &
			[c_2 \sigma_2 + c_3 \sigma_3]_x
			= \frac{1}{6} - \frac{e^{-i \pi / 12}}{3\sqrt 2}
			= \frac{(\sqrt 3 - 1)(-1 + i)}{12},
		\end{align}
		using the conjugate
		$e^{-i \pi / 12} / (3\sqrt 2) = (\sqrt 3 + 1)/12 - i (\sqrt 3 - 1)/12$
		of~\eqref{eq:qubit-t-m4-prefactor}; again matching the
		$w = 1$ and $w = 3$ targets.

		The four cases cover $\F{2}^4$, so~\eqref{eq:qubit-t-m4-id}
		holds pointwise. The exponent bound is
		then~\eqref{eq:exponent-from-finite}.
	\end{proof}

	\textit{Machine-checked Lean formalization:}
	\texttt{QubitTM4.lean}.
	
\end{document}